\newcommand\footnoteref[1]{\protected@xdef\@thefnmark{\ref{#1}}\@footnotemark}
\newcommand{\newclass}[2]{\newcommand{#1}{\textsc{#2}}}
\newclass{\exptime}{ExpTime}
\newclass{\etime}{ETime}
\newclass{\nexptime}{NExpTime}
\newclass{\class}{Class}
\newclass{\pspace}{PSPACE}
\newclass{\expspace}{EXPSPACE}
\newclass{\dtime}{DTime}
\newclass{\dspace}{DSpace}
\newcommand{\TS}{\mathcal{TS}}
\newcommand{\Confs}{\Gamma}
\newcommand{\Labels}{\mathcal{L}}
\newcommand{\Trans}{\mathcal{T}}
\newcommand{\context}[1]{\textsf{#1}}
\newcommand{\pmc}{\context{CB}}
\newcommand{\ch}{\mathsf{ch}}
\newcommand{\Relord}{\Rel_{\leq n}}
\newcommand{\Relinit}{\Rel_{<}}
\newcommand{\rtmp}{r_{tmp}}
\newcommand{\rnew}{r_{\$}}
\newcommand{\rchtmp}{r_{tmp}^{\ch}}
\newcommand{\rchnew}{r_{\$}^{\ch}}
\newcommand{\funQ}{\mathsf{St}}
\newcommand{\funX}{\mathsf{Mem}}
\newcommand{\funB}{\mathsf{Buf}}
\newcommand{\funR}{\mathsf{RVal}}
\newcommand{\funV}{\mathsf{XVal}}
\newcommand{\prog}{\mathsf{Prog}}
\newcommand{\conf}{\gamma}
\newcommand{\initconf}{\gamma_{\mathsf{init}}}
\newcommand{\init}{\mathsf{init}}
\newcommand{\funXinit}{\funX_{\init}}
\newcommand{\funBinit}{\funB_{\init}}
\newcommand{\funRinit}{\funR_{\init}}
\newcommand{\trans}[2]{\xrightarrow{#1 ,#2}}
\newcommand{\tstrans}[1]{\xrightarrow{#1}}
\newcommand{\D}{\mathtt{D}}
\newcommand{\X}{\mathcal{X}}
\newcommand{\Rel}{\mathsf{Rl}}
\newcommand{\Q}{\mathcal{Q}}
\newcommand{\R}{\mathcal{R}}
\newcommand{\T}{\mathcal{T}}
\newcommand{\N}{\mathbb{N}}
\newcommand{\B}{\mathbb{B}}
\newcommand{\scbabname}{\mathtt{AB}}
\newcommand{\scbab}{\scbabname(\prog,k )}
\newcommand{\relAB}{\Relinit{-}\scbab}
\newcommand{\reachword}{\texttt{Reach}}
\newcommand{\greach}{\reachword\xspace}
\newcommand{\reachof}[2]{$#1$-\reachword$[ #2 ]$}
\newcommand{\greachof}[1]{\reachword$[ #1 ]$}
\newcommand\run{\mathbb{\pi}}
\newcommand{\final}{\mathit{final}}
\newcommand{\qfinal}{q_{\final}}
\newcommand\define{\mathrel{:=}}
\newcommand{\Op}{\mathsf{Op}}
\newcommand{\op}{\mathsf{op}}
\newcommand{\wop}[2]{\mathsf{wt}( #1 , #2 )}
\newcommand{\rop}[2]{\mathsf{rd}( #1 , #2 )}
\newcommand{\rl}[2]{\mathsf{rl}( #1 , #2 )}
\newcommand{\arw}[3]{\mathsf{arw}(#1, #2, #3)}
\newcommand{\rlvar}{\mathsf{rl}}
\newcommand\newval{\define \circledast}
\newcommand{\lcs}{\mathcal{L}}
\newcommand{\fsm}{\mathcal{A}}
\newcommand\xes{\textsc{x86}}
\newcommand\tso{{TSO}}
\newcommand{\relname}[1]{\textsf{#1}}
\newcommand\arel[3]{\langle #1, #2, #3 \rangle}
\title{Verification under \tso\ with an infinite Data Domain}
\institute{Uppsala University, Sweden 
	\and Technical University of Denmark, Denmark
	\and Indian Institute of Technology Bombay, India
}
\author{Parosh Aziz Abdulla\inst{1} \and
	Mohamed Faouzi Atig	\inst{1} \and
	Florian Furbach\inst{2}  \and
	Shashwat Garg  \inst{3}}
\authorrunning{Parosh A. Abdulla et al.}
\begin{document}
	\maketitle
	\begin{abstract}
	We examine verification of concurrent programs under the total store ordering (\tso) semantics used by the \xes\ architecture.
	In our model, threads manipulate variables over infinite domains and they can check whether variables are related for a range of relations.
	We show that, in general, the control state reachability problem is undecidable. This result is derived through a reduction from the state reachability problem of lossy channel systems with data  (which is  known to be undecidable). 

	In the light of this undecidability, we turn our attention to a more tractable variant of the reachability problem. Specifically, we study context bounded runs, which provide an under-approximation of the program behavior by limiting the possible interactions between processes. A run consists of a number of contexts, 
	with each context representing a sequence of steps where a only single designated thread is active. 
	We prove that the control state reachability problem under bounded context switching  is \pspace\ complete.
\end{abstract}

	\section{Introduction}

Over the years, research on concurrent verification has been chiefly conducted under the premise that the threads run according to the classical Sequential Consistency (SC) semantics.
Under SC, the threads operate on a set of shared variables through which they communicate \emph{atomically}, i.e., read and write operations take effect immediately.
In particular, a write operation is visible to all the threads as soon as the writer thread carries out its operation.
Therefore, the threads always maintain a uniform view of the shared memory: they all see the latest value written on any given variable and we can interpret program runs as interleavings of sequential thread executions.
Although SC has been immensely popular as an intuitive way of understanding the behaviours of concurrent threads, it is not realistic to assume computation platforms guarantee SC anymore.
The reason is that, due to hardware and compiler optimizations, most modern platforms allow more relaxed program behaviours than those permitted under SC,
leading to so-called \emph{weak memory models}.
Weakly consistent platforms are found at all levels of system design
such as multiprocessor architectures
(e.g., \cite{SSONM2010,SarkarSAMW11}),
Cache protocols (e.g., \cite{aros-micro16,DBLP:conf/hpca/ElverN14}),
language level concurrency (e.g., \cite{LahavGV16}), and
distributed data stores (e.g., \cite{Burckhardt14}).
Program behaviours change dramatically when moving from the SC semantics to weaker semantics.
Therefore, in recent years, research on the verification of concurrent programs under weak memory models
have started to become popular.
A classical example of weak memory models is the Total Store Ordering (\tso) semantics which is a formalization of the Intel x86 processor architecture \cite{OSS2009}.
The \tso\ semantics inserts an unbounded FIFO buffer, called the {\it store buffer}, between each thread and the main memory.
When a thread performs a write instruction, the corresponding operation is appended to end of the buffer, and hence it is not immediately visible to other threads.
The write messages are non-deterministically propagated from the store buffer of a given thread to the shared memory.
Verification of programs that contain data races needs to take the underlying memory model into account.
This is crucial in hardware-close programming, especially in concurrent libraries or kernels. 
Such applications are inherently racy; exploiting racy WMM operations for efficiency is standard practice.
Our work serves as a foundation for ensuring the correctness of such systems, which often rely on these intricate memory models to achieve optimal performance.

In a parallel development, significant research has been done on extending model checking frameworks to programs with infinite state spaces.
There are two main reasons why a program might have an infinite state space.
The first is that the program has unbounded control structures, which means it can have an unbounded number of threads.
Examples include parameterized systems, in which correctness of the system is checked regardless of the number of threads, and programs that allow dynamic thread creation through spawning
\cite{DBLP:reference/mc/AbdullaST18}.
Secondly, the program may operate on unbounded data structures, such as clocks \cite{AlurD94}, stacks \cite{BEM97}, and queues (\cite{DBLP:conf/lics/AbdullaJ93,10.1145/2933575.2934535}).
These works, including their extensions, have been done under the SC assumption.
Although recent works have started to explore parameterized verification for weak memory models \cite{DBLP:journals/pacmpl/AbdullaAR20,DBLP:conf/tacas/AbdullaAFGHKS23,DBLP:conf/podc/KrishnaG0C22}, the verification of programs that operate on a shared unbounded data structure with weak memory semantics has remained unexplored until now.

In this paper, we combine infinite-state programs with weak memory models: we study the decidability and complexity of the reachability problem for programs operating on unbounded data structures under the TSO semantics.
While the \tso\ semantics has been extensively studied (e.g., \cite{DBLP:conf/esop/BouajjaniDM13,DBLP:conf/esop/AbdullaAP15}), it has been assumed that the data domain is finite. 
This means that the possible values of a shared variable or a register are bounded. 
In contrast, our model allows for an infinite domain such as natural numbers $\N$ or real numbers $\mathbb{R}$. 
It contains register assignments, an operator that may assign an arbitrary value to a register, and a set of relations that act as guards.
We focus on relations equality and "greater than" on totally ordered sets and combinations, negations and inversions of them.
Our model finds practical utility in continuously running concurrent protocols. 
A prime example is the bakery ticket protocol used in various scenarios. It is presented in \autoref{sec:app:lamport}.
Here, an unbounded number of requests occur, each assigned increasing numbers and the lowest-numbered request is serviced. 
This presents a scenario with inherent races that requires an infinite domain which our model can effectively verify. 
Note that our model is infinite in multiple dimensions: the threads are infinite-state as they operate on unbounded data domains, the store buffers are unbounded, and they carry write-messages over an unbounded domain.

In order to perform safety verification, we need to decide whether there is an execution that can reach some undesirable control state.
We study the control state reachability problem and show that for many domains and relations, it is undecidable.
Therefore, we propose an alternative approach by introducing an under-approximation schema using context-bounding \cite{DBLP:conf/tacas/QadeerR05,MQ07,DBLP:journals/fmsd/LalR09,DBLP:conf/cav/TorreMP09,ABP2011}. 
Context-bounding has been proposed in \cite{DBLP:conf/tacas/QadeerR05} as a suitable approach for efficient bug detection in multithreaded programs. Indeed, for concurrent programs, a bounding concept that provides both good coverage and scalability must be based on aspects related to the interactions between concurrent components. It has been shown experimentally that concurrency bugs usually show up  after a small number of context switches \cite{MQ07}.
In this work, we study a context bounded analysis where only the active thread may perform an operation and update the memory.
We show that in this case, the state reachability problem is not only decidable, but even \pspace\ complete. 
To this end, we perform a two-step abstraction that employs insights about context bounded runs of \tso\ semantics as well as the structure of reachable configurations.

In the first step of our abstraction process, we refine the methods introduced by \cite{ABP2011}. Their construction introduces a code-to-code translation that abstracts the buffer, simplifying the problem to state reachability under SC.
%
%
%
Our approach leverages the fact that this abstraction does not explicitly depend on variable values. 
In our case, the abstraction step yields a register machine where the register values are integers or real numbers, and the transitions are conditioned by ``gap-constraints" \cite{avis06,DBLP:conf/icalp/Cerans94,DBLP:journals/fuin/LazicNORW08}. 
Gap constraints serve to identify, within each system configuration, (i) the variables with identical values and (ii) the gaps (differences) between variable values. 
Notably, these gaps can be arbitrarily large.
The papers \cite{avis06,DBLP:conf/icalp/Cerans94,DBLP:journals/fuin/LazicNORW08} analyze programs with gap constraints within the framework of well-structured systems \cite{DBLP:journals/iandc/AbdullaCJT00,DBLP:journals/tcs/FinkelS01}. As a result, they do not provide upper bounds on the complexity.

As another key contribution of this paper, we propose a method to achieve \pspace\ completeness. 
The fundamental idea behind our algorithm is that for any system execution, there is an alternative execution with larger gaps among the variables. 
This implies that we do not need to explicitly track the gaps between variables, as is the case in \cite{avis06,DBLP:conf/icalp/Cerans94,DBLP:journals/fuin/LazicNORW08}. 
Instead, we implement a second (precise) abstraction step, focusing solely on the order of variables. For any pair of variables $x$ and $y$, we record whether $x=y$, $x<y$, or $x>y$.

	\section{Related Work}
Not much current work considers the complexity and decidability of infinite-state state programs on weak memory models.
Furthermore, most existing works consider parameterized verification rather than programs with infinite data domains.
The paper \cite{DBLP:journals/pacmpl/AbdullaAR20} considers
parameterized verification of programs running
under TSO, and shows that the reachability problem is \pspace\ complete.
However, the work assumes that the threads are finite-state and, in particular, the threads do not manipulate unbounded data domains.
The paper \cite{DBLP:conf/podc/KrishnaG0C22} shows \pspace\ completeness when the underlying semantics is the Release-Acquire fragment of C11.
The latter semantics gives rise to a different semantics compared to \tso.
The paper also considers finite-state threads.

In \cite{DBLP:journals/lmcs/AbdullaABN18},
 parameterized verification of programs running
under \tso\ is considered.
However, the paper applies the framework of well-structured systems
where the buffers of the threads are modelled as lossy channels,
and hence
the complexity of the algorithm is non-primitive recursive.
In particular, the paper does not give
any complexity bounds for the reachability
problem (or any other verification problems).
The paper \cite{DBLP:conf/esop/BouajjaniDM13} considers checking the robustness
property against SC for parameterized systems running
under the \tso\ semantics.
However, the robustness problem is entirely different
from reachability and the techniques and results developed in this work
cannot be applied in our setting.

The paper \cite{DBLP:conf/tacas/AbdullaAFGHKS23} considers parameterized verification under the TSO semantics when the individual threads are infinite-state.
However, the authors study a {\it restricted} model, where it assumes that (i) all threads are identical and (ii) the threads do not use atomic operations.
Generally, parameterized verification for the restricted model is easier than non-parameterized verification.
For instance, in the case of \tso\ where the threads are finite-state, the restricted parameterized verification problem is in \pspace\ \cite {DBLP:journals/pacmpl/AbdullaAR20} while the non-parameterized problem has a non-primitive recursive complexity \cite{DBLP:conf/popl/AtigBBM10}.

The are many works on extending infinite-state systems with unbounded data domains.
Well studied examples are Petri nets with data tokens \cite{DBLP:journals/fuin/LazicNORW08}, stacks with unbounded stack alphabets \cite{DBLP:conf/lics/AbdullaAS12}, and lossy channel systems with unbounded message alphabets \cite{10.1145/2933575.2934535}.
All these works assume the SC semantics and are hence orthogonal to this work.
	\section{Total Store Order (\tso)}\label{sec:tso}
Let $\B=\{true,false\}$.  
Given a function $f: A\rightarrow B$ with $a\in A,b\in B$,  $f[a\leftarrow b]$ is defined as follows: $f[a\leftarrow b](a)\define b$, $f[a\leftarrow b](a')\define f(a')$ for any $a'\in A$ with $a'\neq a$. 
We write $x\in w$ for letter $x\in \Sigma$ occurring in word $w\in \Sigma^*$ and $w'\leq w$ for  $w'\in \Sigma^*$ being a subsequence of $w$.

Let $x$ and $y$ be two natural (real) numbers. Let $n \in \N$, we use $x<_n y$ (resp. $ \leq_n y$) to denote that  $x+n<y$ (resp.  $x+n \leq y$). A data theory   is defined by a pair $(\D,  \Rel )$ where $\D$ is an infinite data domain and $ \Rel \subseteq \D \times \D \rightarrow \B$ is a finite set of relations over $\D$. In this paper, we restrict ourselves to the set of natural/real numbers as data domain, and the set of relations $\Rel$ to be  a subset of  $\Relord=\{ =, \neq , <, \leq, <_n, \leq_n \mid n\in \N \}$. We assume w.l.o.g. that $0 \in \D$.

\paragraph*{Transition Systems}
A labelled transition system is a tuple $\TS=(\Confs, \Labels, \Trans,\initconf)$ that consists of a set of  \emph{configurations} $\Confs$, a finite set of labels $\Labels$, a labelled transition relation $\Trans \subseteq \Confs \times \Labels \times \Confs$, and an initial configuration $\initconf\in \Confs$.
We write $\conf \tstrans{\ell} \conf' $ for $\langle \conf,\ell,\conf' \rangle \in \Trans$.
We say that $\run = t_1\ldots t_n \in \Trans^*$ is a run of $\TS$ if  there is a sequence of configurations $\conf_1, \conf_2, \ldots, \conf_{n+1}$ such that $t_i=\conf_i\tstrans{\ell_i} \conf_{i+1}$ for $i\leq n$ and $\conf_1=\initconf$.
The run $\run$ ends in configuration $\conf_{n+1}$.
We say that $\conf$ is reachable if there is a run $\run$ of $\TS$ that ends in $\conf$.

\paragraph*{Programs}
A concurrent program $\prog$ consists of finite set of threads $\T$.
Each thread $t\in \T$ is a finite state machine that works on its own set of local registers $\R_t$. The local registers of different threads are disjoint. Let $\R=\cup_{t \in \T}\R_t$. 	
The threads communicate over a finite  set of shared variables $\X$. The registers and the shared variables take their values from a data theory  $(\D,\Rel)$. 
Formally, a thread is a tuple $t=\langle \Q_t, \R_t, \Delta_t, q_{\init}^t\rangle $ where $\Q_t$ is a finite set of  states of thread $t$, $q_{\init}^t \in \Q_t$ is the initial state of $t$, and $\Delta_t\subseteq \Q_t \times \Op \times \Q_t$ is a finite set of transitions that change the  state and execute an operation $\op\in \Op$. Let $x\in \X, r_1,r_2 \in \R_t$. A transition $\delta\in \Delta_t$ is a tuple $\delta =\langle q,\op,q'\rangle $ where the operation $\op \in \Op$ has one of the following forms: $(1)$ $r_1 \define r_2$ assigns the value of register $r_2$ to register $r_1$, $(2)$ $r_1 \newval$ non-deterministically assigns a value to register $r_1$, $(3)$ $\rl{r_1}{r_2}$   checks if the  values of the two registers $r_1$ and $r_2$ satisfy the relation  $\mathsf{rl} \in \Rel$,
$(4)$ $\rop x {r_1}$ reads the value of shared variable $x$ and stores it in register $r_1$, $(5)$ $ \wop x {r_1}$ writes the value of register $r_1$ to shared variable $x$, and $(6)$ $\arw{x}{r_1}{r_2}$	 is the atomic read write operation which atomically executes a read followed by a write operation. 

\begin{figure*}[tb]
\centering
\includegraphics[width=\textwidth]{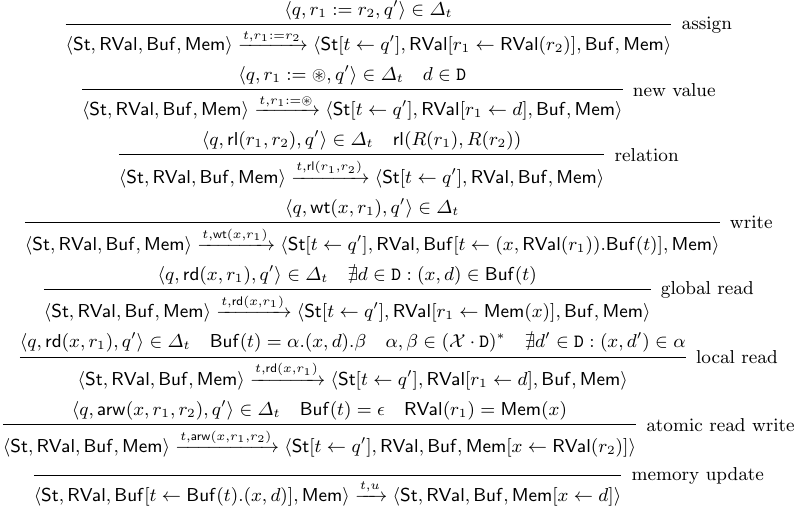}
\caption{The transition relation of  TSO. We assume that  $\funQ(t)=q$.}
\label{fig:rules_tso}
\end{figure*}

\paragraph*{\tso\ Semantics}
The \tso\ memory model \cite{SSONM2010} is used by the x86 processor architecture. 
Each thread has its own FIFO write buffer. 
Write operations $\wop{x}{r}$ in a thread $t$ do not update the memory immediately; if $d\in \D$ is the value of $r$, then $(x, d)$ is appended to the buffer of $t$. The buffer contents are updated to the shared memory non-deterministically. A read operation $\rop{x}{r}$ in $t$ accesses the latest write in the buffer of 
$t$. In case there is no such write, it accesses the shared memory.
For the atomic read write operation $\arw{x} {r_1}{ r_2}$ in thread $t$, 
 the buffer of $t$ must be empty ($\epsilon$), and the value of $x$ in the memory must be same as the value of $r_1$. Then $x$  is set to  the value of $r_2$.   

Formally, the \tso\ memory model is a labelled transition system. A configuration $\conf$ is defined as a tuple $\conf=\langle \funQ,\funR,\funB,\funX \rangle $ where  
$\funQ: \T \rightarrow \bigcup_{t\in \T} \Q_t$ maps each thread to its current state, $\funR: \R\rightarrow \D$ maps each register in a thread to its current value,  $\funB: \T \rightarrow (\X \times \D )^*$ maps each thread buffer to its content, which is a sequence of writes. 
Finally, $\funX: \X \rightarrow \D$ maps 
each shared variable to its current value in the memory. 
The initial configuration of $\prog$ is defined by a tuple 
$\initconf=\langle \funQ_{\init},\funR_{\init},\funB_{\init},\funX_{\init} \rangle$ where 
$\funQ_{\init}$ maps  each thread $t$ to its initial  states $q_{\init}^t$,  
$\funRinit$ and $ \funXinit$ assign all registers and shared variables the  value $0$, and 
$\funBinit$ initializes all thread buffers to the empty word $\epsilon$.   
We formally define the labelled  transition relation $\tstrans{\ell} $ on configurations in \autoref{fig:rules_tso} where the label $\ell$   is either of the form $t,\op$ (to denote a thread operation)    or  $t,u$ (to denote an update operation) with $ t \in \T$ is a thread and $\op \in \Op$ is an operation.

\paragraph*{The Reachability Problem \greach}
Given a concurrent program $\prog$
and a state $\qfinal \in \Q_t$ of thread $t$, 
\greach asks, if a configuration  $\conf =\langle \funQ,\funR,\funB,\funX \rangle$ with $\funQ(t)=\qfinal$ is reachable by the transition system given by the \tso\ semantics of $\prog$.  In this case, we say that the state $\qfinal$ is reachable by $\prog$.
We use  \greachof{\D,\Rel} to denote the reachability problem for a concurrent program with  the data theory ($\D, \Rel)$.
	\section{Lamport's Bakery Algorithm} \label{sec:app:lamport}
To demonstrate the practical application of our model, we use it to implement Lamport's Bakery Algorithm \cite{10.1145/361082.361093}.
Created by Leslie Lamport in 1974, it is a cornerstone solution for achieving mutual exclusion in concurrent systems. 
Picture threads as patrons entering a bakery, each is handed a unique ticket upon arrival. 
These tickets, representing the order of entry, dictate the sequence for accessing critical sections. They ensure an orderly execution flow and preventing race conditions in a critical section.

Each thread is assigned a unique number that is larger then the numbers currently assigned to other threads.
The thread possessing the lowest number is granted entry to the critical section. 
This thread may access the critical section an unbounded number of times. 
This means the assigned tickets keep increasing and thus an infinite domain is required.
Note that the algorithm does not rely on precise tickets values, 
we only need to compare the tickets to each other. This makes the protocol well suited to our program model.

The protocol contains $n$ threads where each thread $i\leq n$ is associated with two variables: The ticket number $ticket_i$ and the flag $chosen_i$ which signals whether the thread has chosen a ticket number. 
We assume $r_{TRUE}$ and $r_{FALSE}$ are initialized with different values that represent the boolean values of a flag and that $ticket_i$ is initially the same as $r_{FALSE}$ for all $i\leq n$. 

The algorithm for thread $i$ is given in \autoref{alg:bakery}. 
For the sake of simplicity and compactness we present the transition system as pseudocode. This is equivalent to a program definition since the code only accesses variables and registers using operations $\Op$ with relations $\Relinit$. The remaining instructions only affect the finite control flow and can be expressed using transitions.
It is easy to see how a corresponding program definition $\prog$ can be constructed.
We observe that this implementation of the protocol has the same asymptotic complexity as the optimal algorithm given in \cite{Ricart_Agrawala}, which uses requests and replies to maintain synchronization.

\begin{algorithm}
	\begin{algorithmic}[1]
		\STATE  $\wop {chosen_i} {r_{FALSE}}$ \label{alg:L1}  \COMMENT{Begin choosing}
		\STATE $r_i \newval$  \COMMENT{Pick random ticket}
		\FORALL{$1\leq j \leq n$} 
		\STATE $\rop {ticket_j} {r_j}$
		\IF{$(r_i<r_j)$} 
		\STATE goto line \ref{alg:L1} \COMMENT{New ticket needed.}
		\ENDIF
		\ENDFOR
				\STATE $\wop{ticket_i}{r_i}$ \COMMENT{Ticket accepted}
				\STATE $\wop {chosen_i} {r_{TRUE}}$ \COMMENT{Choosing finished}
				\FORALL{$1\leq j \leq n$}
								\STATE $\rop {chosen_j} {r_j}$\label{alg:L2}
								\IF{$(r_j\neq r_{TRUE})$} \STATE goto line \ref{alg:L2} \COMMENT{Thread $j$ is still choosing}
								\ENDIF
								\STATE $\rop {ticket_j} {r_j}$\label{alg:L3}
								\IF{$(r_j \neq r_{FALSE} ~\&~ r_j < r_i)$} \STATE goto line  \ref{alg:L3}  \COMMENT{Lower ticket $j$ found}
								\ENDIF
				\ENDFOR
				\STATE CRITICAL Section
				\STATE $r_i\define r_{FALSE}$
				\STATE goto line \ref{alg:L1} \COMMENT{Back to NON-CRITICAL}
	\end{algorithmic}
\caption{Lamport Bakery Protocol}\label{alg:bakery}
\end{algorithm}
	\section{State Reachability for \tso\ with (Dis)-Equality  Relation}\label{sec:nocontext}
	We show that the reachability problem for concurrent programs under \tso\ is undecidable when $\{=, \neq\} \subseteq \Rel$. The proof is achieved through a reduction from the state reachability problem of Lossy Channel Systems with Data (DLCS) \cite{10.1145/2933575.2934535}, which is already known to be undecidable. 
To simulate the lossy channel, we employ write buffers, as both are implemented as first-in-first-out queues. However, there are three main distinctions that must be considered:
	(i) write buffers do not contain letters, 
	(ii) write buffers are not lossy, and
	(iii) the semantics of reads differ from receives.
	
We address these distinctions as follows: 
(i) We encode the letters as variables.
(ii) We model writes being lost by avoiding to read them. 
(iii) To prevent buffer reads, we transfer the writes into a write buffer of a second thread with a different variable. We ensure that every write is accessed only once by overwriting them immediately with a different value.
\begin{theorem}\label{thm:undec}
	 \greachof{\D,\Rel} is undecidable for $\{=,\neq \} \subseteq   \Rel $. 
	\end{theorem}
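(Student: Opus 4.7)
The plan is to reduce the state reachability problem of Data Lossy Channel Systems (DLCS)---known to be undecidable~\cite{10.1145/2933575.2934535}---to \greachof{\D, \{=,\neq\}}. Given a DLCS $\lcs$ with a single channel of finite alphabet $\Sigma$ over the same data domain $\D$, I would construct a concurrent TSO program $\prog$ whose threads jointly simulate $\lcs$. The key idea is to represent the lossy channel using the FIFO write buffers of two cooperating threads---a main thread $T_M$ mirroring the DLCS control and a forwarder $T_F$---together with two shared variables per letter, so that a designated DLCS state is reachable iff an encoded state of $T_M$ becomes reachable in $\prog$.

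For each $a \in \Sigma$ I introduce two fresh shared variables $x_a$ and $y_a$; at startup I also fix a distinguished marker $\bot \in \D$ by executing $r_\bot \newval$ on a dedicated register and then pinning it with $\neq$-guards so that it remains different from every channel value. Addressing the three mismatches listed before the theorem: (i) a DLCS send of $(a,d)$ becomes $\wop{x_a}{r}$ on $T_M$ with $r$ holding $d$, so the pending $x_a$-writes in $T_M$'s buffer encode the newest tail of the channel; (ii) lossiness needs no explicit mechanism---a memory snapshot overwritten by the next propagation before it is captured is indistinguishable from a dropped message; (iii) destructive receives go through the forwarder. Concretely, $T_F$ non-deterministically picks some $a$ and executes $\arw{x_a}{r_d}{r_\bot}$, atomically capturing the current value of $x_a$ into $r_d$ while overwriting $x_a$ with $\bot$; whenever $r_d \neq r_\bot$, $T_F$ appends $\wop{y_a}{r_d}$ to its own buffer, so that forwarded values accumulate in $T_F$'s buffer in observation order. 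A DLCS receive of $(a,d)$ is then simulated on $T_M$ by $\arw{y_a}{r_d}{r_\bot}$ followed by the original DLCS data guards on $r_d$; the atomicity of $\arw$ plus its empty-buffer requirement guarantees that each memory snapshot of $x_a$ and of $y_a$ is consumed at most once and that the send-forward-receive pipeline preserves FIFO order.

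The principal obstacle will be proving the two directions of the simulation. Soundness---every run of $\prog$ projects onto a lossy run of $\lcs$---rests on the invariant that at every reachable configuration the encoded channel content equals the concatenation, from oldest to newest, of the current value of $y_a$ (when it differs from $\bot$), the pending $y_a$-writes of $T_F$, the current value of $x_a$ (when it differs from $\bot$), and the pending $x_a$-writes of $T_M$. Combined with the atomicity of the read-write operations, the $\neq \bot$ tests, and the FIFO discipline of TSO buffers, this invariant forces the sequence of values $T_M$ ever captures on $y_a$ to be a subsequence of the sequence sent, with the omissions accounted for as lossy drops. Completeness---every DLCS run is matched by some TSO run---follows by scheduling propagations and $T_F$'s captures to mirror the DLCS step sequence: before each receive in $\lcs$ we drain $T_M$'s buffer and push the intended message through the $x_a \to y_a$ pipeline, exploiting that propagations in TSO are non-deterministic and can be delayed arbitrarily. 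Combining both directions yields that $\lcs$ reaches its target state iff $\prog$ reaches the corresponding encoded configuration, giving the claimed undecidability.
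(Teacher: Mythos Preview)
Your overall plan matches the paper's---two threads, variables $x_a,y_a$ per letter, a distinguished marker---but there is a genuine gap in the completeness direction, caused by your reliance on $\arw$. First a semantic point: in this paper $\arw{x}{r_1}{r_2}$ is compare-and-swap (it is enabled only when the memory value of $x$ already equals $r_1$), not the fetch-and-store you describe; this alone is repairable by guessing $r_d$ with $\newval$ first. The unrepairable problem is the empty-buffer precondition. Since $T_F$ captures via $\arw{x_a}{\cdot}{r_\bot}$ and $T_M$ receives via $\arw{y_a}{\cdot}{r_\bot}$, $T_F$ must flush all pending $y_a$-writes before each capture and $T_M$ must flush all pending $x_a$-writes before each receive. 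Consequently, at the instant $T_M$ receives, the pipeline can hold at most three values on a given letter: one each in the memory cells $x_a$ and $y_a$, and one in $T_F$'s single buffered $y_a$-write. Consider a DLCS run that sends $d_1,d_2,d_3,d_4$ on one letter and then receives all four. The best configuration you can arrange just before the first receive is $y_a=d_1$, $T_F$'s buffer holding $d_2$, $x_a=d_3$, and $T_M$'s buffer holding $d_4$. For $T_M$ to execute its $\arw$ it must flush $d_4$, overwriting $d_3$; the only way to save $d_3$ is for $T_F$ to capture it first, but that forces $T_F$ to flush $d_2$, overwriting $d_1$. Either way a value is irretrievably lost, so this non-lossy DLCS run cannot be simulated and your ``drain $T_M$'s buffer and push the intended message through'' step fails.

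The paper avoids this by using \emph{no} $\arw$ in the send/forward/receive gadgets. Because $t$ writes only to $x_a$ and reads only $y_a$ (and $t_{\ch}$ symmetrically), ordinary reads always hit memory and neither buffer ever has to be emptied; $t_{\ch}$'s buffer can therefore hold an unbounded queue of forwarded values, which is what makes the simulation faithful. The ``consume once'' property you obtain from $\arw$ is instead enforced by a separator value $\$$: each data write is immediately followed by a write of $\$$ on the same variable, and a reader, after reading a data value, must next observe $\$$, which certifies that the value has been overwritten in memory and will not be read again. That separator protocol is the missing idea in your construction.
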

	
The rest of this section is devoted to the proof of the above theorem. 
We first recall the definition of Lossy Channel Systems with Data (DLCS) \cite{10.1145/2933575.2934535}. Then, we present the reduction from state reachability problem of DLCS to \greachof{\D,\Rel}.
\begin{figure}
	\centering
\begin{equation*}
\begin{gathered}
    \infer[\text{assign}]{
		\langle q,\funV,w \rangle  \tstrans{x\define y}
		\langle q',\funV[x \leftarrow \funV(y)],w\rangle 
    }{
        \arel{q}{x \define y}{q'}\in \Delta_\lcs 
    }
    {}  
\\
{}
\infer[\text{new value}]{
	\langle q,\funV,w \rangle  \tstrans{x\newval}
	\langle q',\funV[x\leftarrow d],w\rangle 
}{
	\arel{q}{x\newval}{q'}\in \Delta_\lcs 
	& d\in \D \setminus \{\funV(y) \,|\, y \in \X_\lcs\}
}
    {}  
\\
{}
\infer[\text{equality}]{
	\langle q,\funV,w \rangle  \tstrans{x=y} \langle q',\funV,w \rangle 
}{
	\arel{q}{x=y}{q'}\in \Delta_\lcs 
	& \funV(x)=\funV(y)
}
    {}  
\\
{}
\infer[\text{disequality}]{
	\langle q,\funV,w \rangle  \tstrans{x\neq y} \langle q',\funV,w \rangle 
}{
	\arel{q}{x\neq y}{q'}\in \Delta_\lcs 
	& \funV(x)\neq \funV(y)
}
    {}  
\\
{}
\infer[\text{send}]{
	\langle q,\funV,w \rangle  \tstrans{!\langle a,x\rangle} \langle q',\funV,(a,\funV(x)).w \rangle 
}{
	\arel{q}{!\langle a,x\rangle}{q'}\in \Delta_\lcs 
	& 
}
    {}  
\\
{}
\infer[\text{receive}]{
	\langle q,\funV,w . (a,d) \rangle  \tstrans{?\langle a,x\rangle} \langle q',\funV[x\leftarrow d],w \rangle 
}{
	\arel{q}{?\langle a,x\rangle}{q'}\in \Delta_\lcs 
	& 
}
    {}  
\\
{}
\infer[\text{lossiness}]{
	\langle q,\funV,w\rangle  \tstrans{loss} \langle q,\funV,w' \rangle
}{
	& w' \leq w
}
\\
\end{gathered}
\end{equation*}
\caption{The transition relation of DLCS}
\label{fig:rules_dlcs}
\end{figure}	
\paragraph*{Lossy Channel Systems with Data}
A DLCS $\lcs= \langle \Q_\lcs , \X_\lcs, \Sigma_\lcs, \Delta_\lcs, q_{\init}\rangle $ consists of a finite set of states $\Q_\lcs$, a finite number of variables $\X_\lcs$ ranging over an infinite domain $\D$, a finite channel alphabet $\Sigma_\lcs $, $q_{\init} \in \Q$  is the initial state, and a finite set of transitions $\Delta_\lcs$. The set $\Delta_\lcs$ of transitions is a subset of $  \Q_\lcs \times \Op_\lcs \times \Q_\lcs$. Let $x, y \in \X_\lcs$. The set $\Op_\lcs$ consists of the following operations (1) $x\define y$ which assigns the value of $y$ to $x$, (2) $x\newval$, which assigns a fresh value from $\D$ that is different from the existing values of all variables\footnote{This differs from the $\circledast$ in \tso\  where the value $d\in \D$ assigned by the operation $x\newval$ can be anything.},  $(3)$ ${x}={y}$ ($x\neq y$) which compares the value of variables  $x$ and $y$, $(4)$ $!\langle a,x\rangle$ which appends letter $a \in \Sigma_\lcs$ together with the value of $x$ to the channel, $(5)$ $?\langle a,x\rangle$ which deletes the head of the channel $\langle a,d\rangle$ and stores the value $d$ in $x$, and (6) $loss$ which removes elements in the channel.

A configuration $\conf$ of DLCS is defined by the tuple $\langle q, \funV, w\rangle$ where $q \in \Q_\lcs$ is the current state, $\funV: \X_\lcs \rightarrow \D$ is the current valuation of the variables, and $w \in (\Sigma \times \D)^*$ is the content of the lossy channel. 
The system is lossy, which means any element in the channel may disappear anytime.
The initial configuration $\conf_{\init}$ of $\lcs$ is defined by $(q_{\init},\funV_{\init},\epsilon)$ where  $\funV_{\init}(x)=0$ for all $x \in \X_{\lcs}$. The transition relation of DLCS is given in Figure \ref{fig:rules_dlcs}.

The state reachability problem for $\lcs$ asks whether, for a given final state $\qfinal \in \Q$, there is a reachable configuration $\conf$ of the form $\conf=\langle \qfinal, \funV, w \rangle$. In this case, we say that the state $\qfinal$ is reachable by $\lcs$.

\begin{theorem}[\cite{10.1145/2933575.2934535}]
	The state reachability problem for DLCS is undecidable.
\end{theorem}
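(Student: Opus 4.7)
The plan is to reduce from the halting problem of a two-counter Minsky machine $M$, which is classically undecidable. I would construct a DLCS $\lcs_M$ together with a distinguished state $\qfinal$ such that $\qfinal$ is reachable iff $M$ halts. The whole difficulty lies in the $loss$ rule: on a naive encoding, the channel contents can be silently corrupted at any moment. The trick is to exploit the data tags as a ``chain of custody'' that forces any lossy run into a non-accepting sink, so that only loss-free executions contribute to reachability of $\qfinal$ and those faithfully simulate $M$.

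I would organise the construction in two steps. First, I show that a DLCS can simulate a \emph{perfect} (non-lossy) FIFO channel: each logical send samples a fresh tag $d'$ via $x\newval$, enqueues a message carrying both the previously issued tag $d$ (kept in a sender register $r_s$) and the new tag $d'$, then sets $r_s \define d'$; each logical receive dequeues a pair $\langle d_1,d_2\rangle$, checks $d_1 = r_e$ against a receiver register $r_e$ holding the last accepted tag, and sets $r_e \define d_2$ on success or branches to a dead sink on failure. Any single element lost from the channel breaks the tag chain, so the next receive will fail its equality test. Second, since a perfect FIFO queue is Turing-complete, the two counters of $M$ can be encoded as blocks of unary markers separated by delimiters, with increment, decrement, and zero-test compiled to channel send, receive, and comparison against the delimiter. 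Reachability of $\qfinal$ in $\lcs_M$ thus coincides with halting of $M$.

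The main obstacle is the correctness of step one, because the freshness operator $x\newval$ in a DLCS guarantees distinctness only from the current variable valuation and \emph{not} from values still residing in the channel. A tag that has been overwritten in $r_s$ could in principle be drawn again and, after a carefully timed sequence of losses, silently re-link a broken chain. Overcoming this requires a more sophisticated scheme than the single-stamp chain sketched above. I would augment every message with a short, fixed-size window of predecessor tags and have the receiver check that window against a matching sliding window of accepted tags; the key invariant to prove by induction on the number of simulated sends is that, for every channel content reachable under arbitrary applications of $loss$, either the sequence observed by the receiver is a prefix of the truly sent sequence or some subsequent equality test must fail. Once this invariant is in place, the composition with the Minsky-on-perfect-queue construction is routine, and undecidability of $M$'s halting problem transfers to reachability of $\qfinal$ in $\lcs_M$.
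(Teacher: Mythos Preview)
First, note that the present paper does not prove this theorem: it is imported from \cite{10.1145/2933575.2934535} and used as a black box in the reduction establishing Theorem~\ref{thm:undec}. There is therefore no argument in the paper to compare your attempt against.

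On the substance of your attempt: the overall strategy (simulate a perfect FIFO queue, then encode a two-counter machine on top of it) is the natural one, and you correctly isolate the real obstacle, namely that $\newval$ is fresh only with respect to the current variable valuation and not with respect to data still sitting in the channel. Your proposed repair, however---a fixed-size window of $k$ predecessor tags checked on each receive---does not close the gap, and the invariant you announce is in fact false for every fixed $k$. In the rotating-tape encoding the channel holds $L$ cells with $L$ unbounded; when a new tag $d_m$ is drawn, the only values it is forced to avoid are the $k$ most recently sent tags $d_{m-1},\ldots,d_{m-k}$ and the $k$ most recently received tags $d_{m-L},\ldots,d_{m-L-k+1}$, so any $d_{m-g}$ with $k<g<L$ is a legal outcome of $\newval$. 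An adversarial run can therefore arrange $d_{m}=d_{m-g}$ for $k$ consecutive values of $m$, manufacturing an entire forged window, after which dropping the intervening $g$ cells passes all $k$ receiver checks. The step you describe as ``once this invariant is in place'' is thus exactly the missing argument, and no bounded window supplies it; a genuinely different mechanism (as in \cite{10.1145/2933575.2934535}) is required.
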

\begin{figure}
\centering
 \usetikzlibrary {arrows.meta} 
\usetikzlibrary{arrows}
\usetikzlibrary{shapes}
\tikzset{every picture/.style={>=stealth,bend angle=20}}
\tikzset{every label/.style={font=\scriptsize}}
\tikzset{every node/.style={font=\scriptsize}}
\tikzset{play/.style={circle,draw,minimum size=#1}}
\tikzset{play/.default=0.625cm}

\tikzset{pla/.style={circle,draw,fill=red!30,minimum size=#1}}
\tikzset{pla/.default=0.5cm}
\tikzset{plb/.style={circle,draw,fill=yellow!30,minimum size=#1}}
\tikzset{plb/.default=0.5cm}
\tikzset{player0/.style={circle,draw,fill=green!30,minimum size=#1}}
\tikzset{player0/.default=0.5cm}

\tikzset{prob/.style={diamond,draw,fill=blue!20,minimum size=#1}}
\tikzset{prob/.default=0.7cm}
\tikzset{end/.style={rectangle,draw,minimum size=#1}}
\tikzset{end/.default=0.55cm}

\begin{tikzpicture}[label distance=10mm]

\node (q1)  [pla] {$q$};
\node (2) [pla, right = of q1] {};
\node (q3) [pla, right = of 2] {};
\node (4) [pla, right = of q3] {};
\node (5) [pla, right = of 4] {};
\node (6)[pla, right = of 5] {};
\node (7)[pla, right = of 6] {$q'$};
\draw[->] (q1) to node[above=1mm]{$\rtmp :=\circledast$}  (2);
\draw[->] (2) to node[above=1mm]{$\rtmp \neq r_\$ $}  (q3);
\draw[->] (q3) to node[above=1mm] (test) {$\rtmp  \neq r_{x_1}$}  (4);
\draw[ dotted,->,thick] ([xshift=5pt]4.east) to node[above=1mm]{}  ([xshift=-5pt]5.west);
\draw[->] (5) to node[above=1mm] {$\rtmp \neq r_{x_n}$}  (6);
\draw[->] (6) to node[above=1mm] {$r_x \define \rtmp $}  (7);
\node (caption) [above=0.4cm of 4] {$Gad^t_{x:=\circledast} : (q, x:=\circledast, q')$};

\end{tikzpicture}
\vspace*{0.5cm}

\begin{tikzpicture}[label distance=10mm]

\node (1)  [pla] {$q$};
\node (2)  [pla, right = of 1] {};
\node (3) [pla, right = of 2] {};
\node (4)  [pla, right = of 3] {};
\node (5) [pla, right = of 4] {$q'$};
\node (caption) [above= 0.4cm of 3] {$Gad^t_{?\langle a, x\rangle} : (q, ?\langle a, x\rangle, q')$};

\draw[->] (1) to node[above=1mm]{$\rop{y_a}{r_x}$}  (2);
\draw[->] (2) to node[above=1mm]{$r_x \neq r_\$ $}  (3);
\draw[->] (3) to node[above=1mm]{$\rop{y_a}{\rtmp }$}  (4);
\draw[->] (4) to node[above=1mm]{$\rtmp =r_\$ $}  (5);
\end{tikzpicture}
\vspace*{0.2cm}

\begin{tikzpicture}[label distance=10mm]

\node (1) [pla] {$q$};
\node (2) [pla, right = of 1] {};
\node (3right)  [pla, right = of 2] {$q'$};

\draw[->] (1) to node[above=1mm]{$\wop{x_a}{r_x}$}  (2);
\draw[->] (2) to node[above=1mm]{$\wop{x_a}{r_\$}$}  (3right);
\node (caption) [above=0.5cm of 2] 
{$Gad^t_{!\langle a, x\rangle} : (q, !\langle a, x\rangle, q')$};

\node (6)  [plb, right =2cm of 3right] {};
\node (5) [plb, right = of 6] {};
\node (4)  [plb, right = of 5] {};

\node (1) [plb, above= of 6] {$q_{\ch}$};
\node (2) [plb, right = of 1] {};
\node (3) [plb, right = of 2] {};

\draw[->] (1) to node[above=1mm]{$\rop{x_a}{\rchtmp}$}  (2);
\draw[->] (2) to node[above=1mm]{$\rchtmp \neq \rchnew $}  (3);
\draw[->] (3) to node[left]{$\wop{y_a}{\rchtmp}$}  (4);
\draw[->] (4) to node[below=1mm]{$\rop{x_a}{\rchtmp}$}  (5);
\draw[->] (5) to node[below=1mm]{$\rchtmp=\rchnew$}  (6);
\draw[->] (6) to node[left]{$\wop{y_a}{\rchtmp}$}  (1);


\end{tikzpicture}

\caption{$\prog(\lcs)$ with threads $t$ (pink states) and $t_{\ch}$ (yellow states).}
\label{fig:undec}
\end{figure}	
\newpage

\paragraph*{Reduction from DLCS reachability} Given a DLCS $\lcs= \langle \Q_\lcs, \X_\lcs, \Sigma_\lcs, \Delta_\lcs, q_{\init}\rangle $ over  data domain $\D$ with  $\X_\lcs=\{ x_1\ldots x_n \}$, we reduce the state reachability of $\lcs$ to the reachability problem \greachof{\D, \{=,\neq\}} of a concurrent program 
$\prog (\lcs)$, with two threads $t, t_{\ch}$.
 The thread $t$ simulates the operations of $\lcs$, while thread $t_{\ch}$ simulates the 
	lossy channel of $\lcs$ using its write buffer. Let $\R_t=\{\rnew ,\rtmp \} \cup \{r_x \mid x \in \X_\lcs\}$,  $\R_{t_{\ch}}=\{\rchnew, \rchtmp \}$ be the local registers of threads $t$ and $t_{\ch}$. 
		Corresponding to each  $x \in \X_\lcs$, we have the register $r_x$ in thread $t$, which stores the current values of $x$. 
	Registers $\rtmp$ and $\rchtmp$ are used to temporarily store certain values. 
	The shared variables  of $\prog(\lcs)$ are   $\X=\{x_a, y_a \mid a \in \Sigma_\lcs \}$, they help in simulating the behavior of the lossy channel of $\lcs$. 
	  
\noindent \emph{Simulating the DLCS}. 
The transitions of $\prog(\lcs)$ are sketched in \autoref{fig:undec}. The initialization of the program is omitted in the figure and goes as follows. The thread $t_{\ch}$ starts by assigning a non-deterministic value (say $\$$) to the register  $\rchnew$ (i.e., $\rchnew \newval$), then checks that the new value $\$$ is different from $0$ (i.e., by checking that $\rchnew \neq \rchtmp$), and finally performs an atomic read write operation $ \arw{x} {\rchtmp}{\rchnew}$ on each  variable  $x \in \X$. The thread $t$  starts by reading the value of each shared variable $x \in \X$ (i.e., performing $\rop x {\rnew}$) and checks if its value is different from $0$ (i.e., $\rnew \neq  \rtmp$). At the end of this initialization phase,  all the shared variables have the new value $\$$, the registers $\rtmp$ and $\rchtmp$ have the value $0$ and the registers $\rnew$ and $\rchnew$ have the value $\$$. The current state of thread $t$ is the initial state $q_{\init}$ of $\lcs$ while the thread $t_{\ch}$ is in a state $q_{\ch}$.

Every transition $\arel{q}{x \define y}{q'}\in \Delta_\lcs $ is simulated in $\prog (\lcs)$ by threat $t$ with a gadget—a sequence of transitions that starts in $q$ and ends in $q'$.
The transitions $(q, x\define y, q')$,  $(q,  x=y, q')$ and  $(q, x \neq y, q')$ in the DLCS are simulated by the thread $t$
as gadgets with single transitions $(q,r_x\define r_y,q'), (q,r_x=r_y,q')$ and $(q,r_x \neq r_y,q')$, respectively. We omit their description in \autoref{fig:undec}.

To simulate $x\newval$, we load the new value in register $\rtmp $ and ensure that it is different from the values in registers $\rnew$ and $r_{x_1} \ldots r_{x_n}$. 
This is depicted by the gadget $Gad^t_{x\newval}$ in thread $t$.
The send operation $!\langle a, x\rangle$ in the DLCS is simulated by the gadget $Gad^t_{!\langle a, x\rangle}$.
In the DLCS, the send appends the letter $a$ and the value of $x$ to the channel. This is simulated by the write $\wop{x_a}{r_x}$, thereby 
  appending  $(x_a, val(r_x))$ to the buffer of $t$. To simulate reads of the DLCS, we first make note of a crucial difference in the way reads happen in DLCS and \tso. In DLCS, a read happens from the head of the channel, and the head is deleted immediately after the read. In \tso\ however, we can read from the latest write in the shared memory multiple times. In order to simulate the ``read once'' policy of the DLCS, we follow each $\wop{x_a}{r_x}$ with another write $\wop{x_a}{\rnew}$.  
   
Thread $t_{\ch}$ is a loop from the state $q_{\ch}$ which continuously reads from $x_a$ a value from a simulated send followed by the separator $\$ $.
It copies these values to $y_a$ using local register $\rchtmp$.
The first time it reads from $x_a$, it reads the value $d$ of $x$ from a simulated send $!\langle a, x\rangle$. It ensures that this is not the $\$$ symbol  ($\rchtmp \neq \rchnew$), and  writes this value from $\rchtmp$ into variable $y_a$, thus 
appending $(y_a, d)$ in the buffer of $t_{\ch}$. It then reads again the value  of $x_a$  into $\rchtmp$. This time, it makes sure	 
to read $\$ $ with the check $\rchtmp=\rchnew$. 
The receive $?\langle a, x\rangle$ of the DLCS is simulated by $Gad^t_{?\langle a, x\rangle}$. First, we read from $y_a$ and store it in $r_x$, ensuring this value $d$ is not $\$$. 
Then, we read $\$$ from $y_a$. This ensures that the earlier value $d$ 
is overwritten in the memory and is not read twice. 

A loss in the channel of the DLCS results in losing some messages $\langle a, d\rangle$. This is accounted for in $\prog_\lcs$  in two ways. 
Thread $t_{\ch}$ may not pass on a value written from $x_a$ to $y_a$ since the loop may not execute for every value.
Thread $t$ may not read a value written by $t_{\ch}$ in $y_a$ since it was already overwritten by some later writes.

	\begin{lemma}\label{lemma:undec}
	 The state  $q_{final}$ is reachable by $\lcs$ if and only if $q_{final}$ is reachable by $\prog(\lcs)$. 
	\end{lemma}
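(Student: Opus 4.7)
The proof proceeds by establishing both directions of the equivalence.

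For the forward direction, suppose $\lcs$ has a run $\pi_\lcs$ reaching $q_{final}$. I construct a TSO run of $\prog(\lcs)$ that reaches $q_{final}$ in thread $t$. The run begins with the initialization phase: thread $t_{\ch}$ draws a non-deterministic value $\$\neq 0$ into $\rchnew$, then performs an atomic read-write on each $x \in \X$ so that every shared variable holds $\$$ in memory. Thread $t$ then reads each shared variable once, verifying the value differs from $0$, which stores $\$$ in $\rnew$. After initialization I play back $\pi_\lcs$ step by step, using the gadget in thread $t$ that matches each DLCS transition; the assignment, comparison, and $\newval$ gadgets are local to $t$ and go through immediately. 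For a send $!\langle a,x\rangle$, execute $\wop{x_a}{r_x}$ followed by $\wop{x_a}{\rnew}$; for the matching receive $?\langle a,x\rangle$, schedule thread $t_{\ch}$ beforehand to loop once, reading the two values from $x_a$'s memory into $\rchtmp$ and writing them into $y_a$, then flush $t_{\ch}$'s buffer so the value-$\$$ pair reaches memory before $t$'s receive gadget executes. A $loss$ step of the DLCS is mimicked in two ways: if the lost message was never observed by the channel thread, I schedule $t_{\ch}$ to skip its loop iteration, letting the next send on $x_a$ overwrite the memory before $t_{\ch}$ reads it; if $t_{\ch}$ did copy the message but the DLCS omits the receive, I delay so that a later write into $y_a$ overwrites its memory before $t$ would read it.

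For the reverse direction, suppose $\prog(\lcs)$ has a run ending with $\funQ(t) = q_{final}$. I extract a DLCS run by projecting onto the sequence of gadgets completed by $t$. Correctness of the initialization phase guarantees that $\rnew$ and $\rchnew$ both hold the same value $\$\neq 0$, because the checks $\rnew\neq\rtmp$ and $\rchnew\neq\rchtmp$ compare to the initial $0$, and the atomic read-writes force agreement between the two threads on this separator. Each completed send gadget in $t$ produces a DLCS send $!\langle a,x\rangle$; each completed receive gadget in $t$ reads some $d\neq \$$ from $y_a$ followed by the separator, and I show that $d$ must have been written by $t_{\ch}$ during a loop iteration whose first read of $x_a$ obtained $d$ from a prior send of $t$, so it corresponds to a genuine DLCS receive. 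Sends of $t$ whose value was overwritten in $x_a$ before $t_{\ch}$ observed it, or overwritten in $y_a$ before $t$ observed it, are matched to $loss$ steps in the DLCS.

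The main technical obstacle is preserving the FIFO channel discipline through two cascaded buffers. Writes by $t$ to $x_a$ reach memory in program order; the loop of $t_{\ch}$ reads them in that same order into $\rchtmp$ and then writes them to $y_a$, so $t_{\ch}$'s buffer contains the successfully copied pairs in channel order; these flush to $y_a$'s memory FIFO, and $t$ reads them in that order. Thus matched send-receive pairs respect channel order, and unmatched sends fit into $loss$ steps. The argument is made precise by an invariant, maintained between completed gadgets of $t$, that aligns the subsequence of messages still live in $t_{\ch}$'s buffer and in the memory of $y_a$-variables with the current content of the DLCS channel up to a subsequence of lost messages.
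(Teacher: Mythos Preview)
Your overall strategy matches the paper's, but the forward simulation as you phrase it has a timing bug. You schedule $t_{\ch}$'s loop ``beforehand'' for each receive, but by then the value in $x_a$'s memory may already have been overwritten by a later send on the same letter (consider two sends $!\langle a,x_1\rangle$, $!\langle a,x_2\rangle$ followed by two receives); you also never say when $t$'s buffer is flushed after a send, without which $t_{\ch}$ cannot see the value at all. The paper avoids both issues by (i) first normalizing the DLCS run so that every lost message is dropped at the instant it is sent, and (ii) giving explicit transition sequences in which $t$'s two writes are flushed immediately and, for a non-lost send, $t_{\ch}$'s loop fires \emph{right after that send}, before any later send can occur. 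With that scheduling the buffer of $t_{\ch}$ is at all times a faithful copy of the channel, and simulating a receive merely flushes two entries from $t_{\ch}$'s buffer and runs $t$'s receive gadget. Your construction is easily repaired along these lines; the point is that the $t_{\ch}$ loop must be tied to sends, not to receives.

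For the backward direction the paper takes a different route from your invariant. It observes that any TSO run of $\prog(\lcs)$ can be \emph{rearranged} into a canonical form in which every write of $t$ flushes immediately, every read by $t_{\ch}$ follows right after, every flush from $t_{\ch}$'s buffer is delayed until just before $t$ consumes it, and loop iterations of $t_{\ch}$ whose output is never read by $t$ are simply deleted. After this normalization the run has exactly the shape produced in the forward construction, so the corresponding DLCS run can be read off transition by transition. Your invariant-based projection is a legitimate alternative; the rearrangement argument is shorter because it reuses the forward simulation verbatim instead of re-verifying FIFO discipline across the two cascaded buffers.
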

	The formal proof is in \autoref{sec:app_undec}. 
	\autoref{thm:undec} extends to any set of relations that we can use to simulate equality and disequality. For instance $\leq,\nleq\in \Rel$.
	\section{Context Bounded Analysis}
In the light of this undecidability, we turn our attention to a variant of the reachability problem which is tractable.  
We study context bounded runs, an under-approximation of the program behavior that limits the possible interactions between processes.
A run consists of a number of \emph{contexts}.
A context is a sequence of steps where only a certain fixed thread $t$ is \emph{active}. 
We say that $\run \in \pmc (k)$ if and only if there is a partitioning $\run=\run_1 \ldots \run_k$ such that for all contexts $i\leq k$ there is an active thread $t_i\in \T$ such that 
only the active thread  updates the memory and performs operations: If $\conf \tstrans{\ell} \conf'\in \run_i$, then $\ell \in \{t_i\} \times (\Op \cup \{ u\} )$.
\medskip

In the following, we show \pspace\ completeness of \reachof{\pmc(k)}{\D,\Relord} for relations such as (dis) equality, ``greater than'' or even ``greater by at least $n$'' for $n \in \mathbb{N}$ (see Theorem \ref{cor:pmctotalorder}).
Our approach begins with a  proof of \pspace\ hardness 
through a reduction from the non-emptiness problem of the intersection of regular languages \cite{4567949}.

Next, we demonstrate \pspace\ membership by reducing the problem to state reachability of a finite transition system which we solve in polynomial space.
This reduction faces challenges from two main sources, namely, (i) the unbounded size of the write buffers, and (ii) the infinite data domain $\D$.
 In this section, we show how to construct a finite transition system while preserving state reachability in two key steps.
 
 Following \cite{ABP2011}, we first perform a buffer abstraction.
 An in-depth analysis of the \tso\ semantics within context bounded runs reveals a critical insight: 
 Even though the buffer may contain an unbounded number of writes, only a bounded number of these writes can be read later on. 
 This allows us to non-deterministically identify and store the necessary writes using variables.
 
 Finally, we implement a domain abstraction. 
 A popular approach is to abstract the values into equivalence classes based on the supported relations. This reveals our next challenge: (iii) the set of relations $\Relord$ is infinite. 
 We conduct an analysis of the reachable configurations and discover the following: 
 If a configuration is reachable, then any configuration that is the same except with greater distances between differing values is reachable as well. 
 It follows that, for control state reachability, the abstraction does not require the precise distances between variables; their relative order is sufficient.

\subsection{Lower-bound}
We establish \pspace\ hardness by polynomially reducing the problem of checking non-emptiness of the intersection of regular languages to \reachof{\pmc(k)}{\D,\Relord}.
Given a set of finite automata $\fsm_1 \ldots \fsm_n$ with   $\fsm_i= \langle \Q_i, \Delta_i, q^\init_i \Q^F_{i} \rangle$, where $\Delta_i\subseteq \Q_i \times \Sigma \times \Q_i,\; q_i^\init \in \Q_i$, and $\Q^F_{i}\subseteq \Q_i$ for $i\leq n$, the problem asks whether there is a word $w\in \Sigma^*$ that is accepted by each automaton $\fsm_i$ with $i\leq n$.
This is known to be \pspace\ hard\cite{4567949}.

We construct a program $\prog ( \fsm_1 \ldots \fsm_n )$  that consists of a single thread and reaches a state $\qfinal$ if and only if there is such a word.
The idea of the construction is that we assign each state $q_i\in \Q_i$ a unique value stored in a register $r_{q_i}$ and we store the value of the current state of each automaton $\fsm_i$ in a register $r_i$. 
To begin, we ensure that the current states are the initial ones. This means $r_i=r_{q^\init_i}$ holds for each $i\leq n$.
Then, we choose a letter $a\in \Sigma$ and simulate some transition $q_i\tstrans{a} q'_i \in \Delta_i$ for each automaton. This is done by ensuring that the current state is $q_i$ with $r_i=r_{q_i}$ and then updating the current state with $r_i\define r_{q'_i}$. We repeat this step until each current state is a final state. At this point, we know we have simulated runs for each automaton that accept the same word and we reach $q_\final$.

The formal definition of the construction as well as the proof of correctness is given in \autoref{sec:app:pspacehard}.
This is a polynomial reduction of non-emptiness of the intersection of regular languages to \reachof{\pmc(k)}{\D,\Relord}. Observe that we only need test for equality and disequality. The disequalitiy checks are necessary to ensure that each register $r_{q_i}$ has been assigned a different value. 

\begin{theorem}\label{thm:pmchardness}
	\reachof{\pmc(k)}{\D,\Relord} is  \pspace\ hard.
\end{theorem}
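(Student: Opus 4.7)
The plan is to formalize the reduction sketched in the excerpt from the non-emptiness problem of the intersection of regular languages $\fsm_1,\ldots,\fsm_n$ into a single-thread program $\prog(\fsm_1,\ldots,\fsm_n)$ and argue that it is polynomial, sound, and complete. Since the program uses a single thread, any run fits into a single context, so context bounding with $k=1$ is trivially satisfied and the \tso\ store buffer plays no essential role (we only need $=$ and $\neq$ over $\D$, which are contained in $\Relord$).

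First I would describe the construction in three phases inside the single thread. \textbf{Phase 1 (initialisation).} For every automaton $\fsm_i$ and every state $q\in\Q_i$, introduce a register $r_q$. Iterate through these registers: assign $r_q\newval$, and then, using disequality guards $r_q\neq r_{q'}$, check that $r_q$ differs from every previously initialised register. For each $i\le n$, copy $r_{q_i^{\init}}$ into the ``current state'' register $r_i$ with $r_i\define r_{q_i^{\init}}$. \textbf{Phase 2 (simulation of a common letter).} Add a finite-control loop: for each letter $a\in\Sigma$, branch into a gadget that, for each $i\le n$ in turn, nondeterministically picks a transition $(q,a,q')\in\Delta_i$, guards it with the equality check $r_i=r_q$, and performs the update $r_i\define r_{q'}$. \textbf{Phase 3 (acceptance).} After the loop, for each $i\le n$ nondeterministically guess an accepting state $q\in\Q_i^F$ and guard with $r_i=r_{q}$; if all checks pass, transition to the designated state $\qfinal$.

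Next I would argue correctness. \emph{Soundness:} because the disequality chain in Phase~1 forces all $r_q$ to carry pairwise distinct values, the guard $r_i=r_q$ in Phase~2 holds iff the symbolic current state of $\fsm_i$ equals $q$; hence every accepting run of $\prog$ witnesses, by projection, a common word $w\in\Sigma^*$ with $w\in L(\fsm_i)$ for all $i$. \emph{Completeness:} given $w\in\bigcap_i L(\fsm_i)$, let each $\fsm_i$ follow an accepting run on $w$; choose the values for $r_q$ in Phase~1 to be $n|Q_1|+\cdots+n|Q_n|$ pairwise distinct elements of $\D$ (possible since $\D$ is infinite), and in Phase~2 pick, at each letter, the transitions that realise these runs. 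The resulting execution reaches $\qfinal$ within a single context, so it lies in $\pmc(1)\subseteq\pmc(k)$.

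Finally, the construction is clearly polynomial in $\sum_i(|\Q_i|+|\Delta_i|)+|\Sigma|$: Phase~1 has quadratically many disequality edges, Phase~2 one gadget per letter with linearly many transitions, and Phase~3 is linear. The main subtlety to address carefully is the initialisation of the $r_q$ with distinct values: the $\newval$ operator only guarantees \emph{some} value in $\D$, so we must add all $O((\sum_i|\Q_i|)^2)$ disequality guards to rule out collisions, and we must note that $\D$ being infinite ensures a satisfying assignment exists. Everything else is bookkeeping, so no further obstacle is expected; the formal definition and proof are deferred to \autoref{sec:app:pspacehard}.
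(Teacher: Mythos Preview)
Your proposal is correct and follows essentially the same approach as the paper: reduce from intersection-non-emptiness of regular languages to a single-thread program that assigns pairwise-distinct values to state registers via $\newval$ and $\neq$-checks, simulates one transition per automaton per letter using $=$-guards, and finally tests for acceptance. The only negligible differences are that the paper checks disequalities only within each automaton's state set (which suffices) rather than globally, and your count of registers should read $|\Q_1|+\cdots+|\Q_n|$ rather than $n|\Q_1|+\cdots+n|\Q_n|$.
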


\subsection{\pspace\ Upper-bound}
Assume that we are given a program $\prog$ and a context bound $k$. As an intermediary step towards finite state space we construct a finite state machine $\scbab$ with variables, over the infinite data domain $\D$.  The name $\scbabname$ stands for \textit{abstract buffer} as it abstracts from the unbounded write buffers using a finite number of variables.  
We show that $\scbab$ is state reachability equivalent with the \tso\ semantics of $\prog$ bound by $\pmc(k)$.

While abstracting away the buffers, the main challenge is to simulate read operations. Recall from \autoref{sec:tso} that each read operation in  a thread accesses either a write from its own buffer or from the shared memory.
A buffer read always reads from the threads latest write on the same variable.
Since only the active thread may interact with the memory during the context, we can assume w.l.o.g. that all memory updates occur at the end of a context. 
This means a memory read accesses the last write on the same variable that updated the memory in an earlier context, and hence 
we do not need to store the whole buffer content. For \emph{memory reads},
 we need the latest writes leaving the buffer at the end of each context for each variable. 
For \emph{buffer reads}, we only require the latest writes on each variable that are issued by each thread.

\paragraph*{Construction of the abstract machine} The abstract machine  $\scbab$ is defined by the tuple $\langle \Q_{\scbabname},\X_{\scbabname},\Delta_{\scbabname}, q_{\init}^\scbabname\rangle$ where $ \Q_{\scbabname}$ is the finite set of states, $\X_{\scbabname}$ is the finite set of variables, $\Delta_{\scbabname}$ is the transition relation, and $q_{\init}^\scbabname$ is the initial state. 
A control state $q_\scbabname \in \Q_{\scbabname}$ is a tuple $(\funQ,act,j,c,u)$ where: (i) 
the current state of every thread is stored using function $\funQ : \T \rightarrow \Q$; 
(ii) function $act: \{1\ldots k \} \rightarrow \T$ assigns to each context an active thread; (iii)
 the current context is stored in variable $j\in \{1\ldots k \}$;
(iv) the function $c: \X  \times \T \rightarrow \{0, 1\ldots k \}$ assigns to each variable $x\in \X$ and thread $t\in \T$, the (future) context $j'$ in which the latest write on $x$ will leave the write buffer of $t$. 
 This determines when $t$ can access the shared memory on that variable again; 
 and (v) function $u: \{1\ldots k \} \rightarrow 2^\X$ assigns each context $j$ the set of variables that are updated during $j$.
Additionally, we will introduce some helper states with the transitions relation. 
We omit them from the definition of  $\Q_{\scbabname}$.
The initial state $q_{\init}^\scbabname$ is such a helper state.

The set of variables $\X_{\scbabname}$ contains:
(i) the set of variables $\X$ in $\prog$, (ii) the set of registers $\R$, 
	(iii) for each each context $j\leq k$ and each variable $x\in \X $, we introduce a variable $x_j$, which  stores the value of the last write on $x$ that leaves the write buffer in context $j$, (iv) for each thread $t$ and each variable $x\in \X$, we introduce a variable $x_t$ which stores the value of the newest write of $t$ on $x$ that is still in the buffer of $t$. Notice that this is the write that $t$ accesses when reading $x$ (if such a write exists).  

	We define the transition relation $\Delta_{\scbabname}$ in \autoref{fig:rules_scbab}.
	\begin{figure*}[tb]
	\includegraphics[width=\textwidth]{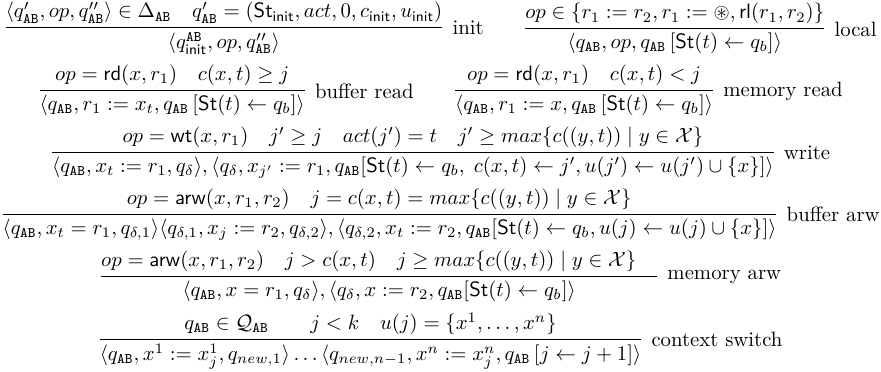}
	\caption{The transition relation $\Delta_{\scbabname}$ of $\scbab$. Let $\delta=\arel{q_a} {op} {q_b}\in \Delta_{t}$ and $q_\scbabname=(\funQ,act,j,c,u)$ with  $\funQ(t)=q_a$ and $act(j)=t$.}
	\label{fig:rules_scbab}
\end{figure*}
	Let $c_{\init}(x,t)=0$ for all $x \in \X$ and $t \in \T$, and $u_{\init}(i)=\emptyset$ for all $i \in \{1,\ldots,k\}$. 
	The outgoing transitions of state $q^{\scbabname}_{\init}$ 
	are the outgoing transitions of $(\funQ_{\init},act,0,c_{\init},u_{\init})$ for every possible function $act$.
	This means the construction guesses a function $act$ and behaves as if the other elements in the tuple have the initial values.
	Local transitions are adapted in a straightforward manner.
	A read on $x$ from the buffer occurs if there is a write on $x$ in the buffer. This means the latest write on $x$ leaves the buffer in a context $c(x,t)$ after (or in) the current context $j$.
	In such a case, we access $x_t$ which holds the latest write on $x$ in the buffer of $t$. 
	If there is no such write on $x$ in the buffer, i.e. $c(x,t)<j$ holds, then the read fetches the value of $x$ from the shared memory.
	
	A write operation on $x$ overwrites the latest entry in the write buffer on that variable $x_t$ and determines a future (or current) context $j'$ with $j'\geq j$ in which it leaves the buffer.
	This is recorded in the variable $x_{j'}$ and $x$ is added to the set $u(j')$ which holds the variables that are updated in context $j'$.
	Note that $j'$ cannot be smaller then any other context in which a write on a variable $y$ leaves the buffer of $t$. This information is obtained from the function $c$. Also, $j'$ must be a context in which $t$ is active.
	
	At any time, the run can switch from a context $j$ with $j<k$ to $j+1$. 
	Let $u(j)=\{ x^1\ldots x^n \}$. These are the variables that are updated during context $j$. 	The values of the last updates on these variables in the context, stored in $x^1_j\ldots x^n_j$, are written to the corresponding variables in the shared memory. 
Since $\scbab$ only performs memory updates at the end of a context, an atomic read write $\arw{x}{r_1}{r_2}$ requires that the current buffer content leaves the buffer in the current context. 
This is ensured by using the condition $  j\geq  max\{ c((y,t )) \mid y\in \X \}$. 
If there is a write on $x$ in the buffer of $t$, then  $j=c(x,t)$. This is covered by the \emph{buffer arw} rule in \autoref{fig:rules_scbab}. Here, the current value of $x$ is stored in $x_t$, so we first check that it equals $r_1$ and update $x_t$ as well as $x_j$ with $r_2$.
If $j>c(x,t)$ holds, then there is no write on $x$ in the buffer of $t$ (\emph{memory arw} rule) and we compare the value of $x$ in the shared memory with $r_1$ and update it to $r_2$.

 A configuration $\conf=(q_\scbabname, \funX)$ in the induced LTS of $\scbab$ consists of a state  $q_\scbabname \in \Q_\scbabname$ along with a variable assignment $\funX$.
Let $\conf_{\init}=(q^\scbabname_{\init}, \funX_{\init})$ be the initial configuration of  $\scbab$. 
 Given the transitions $\Delta_{\scbabname}$, 
we can define the transitions in the induced LTS 
in a straightforward manner.
A  state $\qfinal \in \Q_t$ of thread $t$ is said to be reachable by $\scbab$ if and only if there is a reachable configuration of the form $((\funQ,act,j,c,u), \funX)$ such that $\funQ(t)=\qfinal$ holds. 

\medskip

\begin{lemma}\label{lem:scbab}
	A state of $\prog$ is reachable under \tso\ by a run $\run \in \pmc(k)$ if and only if it is reachable by $\scbab$.
\end{lemma}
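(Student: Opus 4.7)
The plan is to prove both directions by explicit simulation guided by an invariant linking TSO configurations to $\scbabname$ configurations. For a $\pmc(k)$ run in its $j$-th context reaching TSO configuration $\langle \funQ, \funR, \funB, \funX\rangle$, the matching $\scbabname$ configuration $((\funQ', act, j, c, u), \funX_{\scbabname})$ will satisfy: (i) $\funQ' = \funQ$ and $\funX_{\scbabname}$ stores $\funR$ on registers and $\funX$ on shared variables; (ii) for every thread $t$ and variable $x$, if $\funB(t)$ contains a pending write on $x$ then $x_t$ equals the most recent such value and $c(x,t) \geq j$ is the context in which that write will leave the buffer, otherwise $c(x,t) < j$; (iii) for every $j' \geq j$, the set $u(j')$ lists precisely the variables updated during context $j'$, and each $x_{j'}$ holds the value of the last write on $x$ that updates memory during $j'$.

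For the completeness direction (TSO $\Rightarrow$ $\scbabname$), given a $\pmc(k)$ run $\run$ reaching $\qfinal$, I would first normalise $\run$ by pushing every buffer update step as close to the end of its context as possible. Since within each context only the active thread performs operations, this reordering is behaviourally equivalent: intermediate moments when a thread's own buffer is drained are invisible to the other threads, which never read memory during that context. From the normalised run I read off $act$ and, for each write, the context $j'$ in which its update occurs; these fix the functions $act$, $c$, $u$ that $\scbabname$ guesses when leaving $q_{\init}^{\scbabname}$. Each TSO step is then mirrored by its $\scbabname$ counterpart: local, read, and write steps apply directly, and the side conditions of the write rule ($j' \geq j$, $j' \geq \max\{c(y,t) \mid y \in \X\}$, $act(j') = t$) hold because the normalised run respects buffer FIFO order within each thread. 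An atomic read-write falls into the \emph{buffer arw} or \emph{memory arw} case according to whether $c(x,t) = j$ or $c(x,t) < j$, which corresponds exactly to the TSO requirement that $t$'s buffer be empty at that point.

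For the soundness direction ($\scbabname \Rightarrow$ TSO), given an $\scbabname$ run reaching $\qfinal$, I would build a TSO run by (a) mirroring every local, read, write, and atomic step by its TSO counterpart performed by thread $act(j)$, where $j$ is the current context, and (b) replacing each context-switch step from $j$ to $j+1$ with the sequence of TSO update steps that drain, from the buffer of $act(j)$ in FIFO issue order, exactly the writes on the variables in $u(j)$. The invariant then guarantees correctness of reads: a buffer read returns $x_t$, which matches the latest pending buffer write by (ii); a memory read occurs only when $c(x,t) < j$, and then the shared value stored in $\funX_{\scbabname}$ is exactly what TSO would observe after the updates scheduled at the preceding context switches. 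The write-rule side condition $j' \geq \max\{c(y,t) \mid y \in \X\}$ is what ensures that the update sequence appended at each context switch is a legal FIFO drain of thread $act(j)$.

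The principal obstacle is maintaining invariant (ii) across the $\scbabname$ write transition, because $x_t$ and $c(x,t)$ forget all previously buffered writes of $t$ on $x$. One has to verify that this loss is harmless: by TSO's FIFO discipline, a later local read of $t$ on $x$ always returns the newest buffered write, and other threads can only ever observe the \emph{last} write on $x$ of any given context, which is exactly what $x_{j'}$ and $u(j')$ preserve. A similar subtlety arises for the atomic read-write, where the split between \emph{buffer arw} and \emph{memory arw} in Figure~\ref{fig:rules_scbab} must be shown to coincide with the TSO requirement of an empty buffer; this follows from the normalisation used in the completeness direction, which ensures that all of $t$'s writes issued in context $j$ either reach memory inside $j$ or not at all. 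Once these observations are spelled out, the remaining correspondences are routine case analyses matching the rules of Figure~\ref{fig:rules_scbab} against those of Figure~\ref{fig:rules_tso}.
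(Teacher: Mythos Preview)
Your overall strategy coincides with the paper's: normalise the \tso\ run so that updates migrate toward the end of each context, then observe that only the last write per variable (per thread, per context) is ever read, which is exactly what the variables $x_t$ and $x_{j'}$ of $\scbab$ record. Your invariant-based presentation is in fact more explicit than the paper's argument, which stays at the level of ``only these writes matter.''

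There is, however, a genuine gap in your soundness direction ($\scbab\Rightarrow$ \tso) concerning atomic read-write. Your constructed \tso\ run inserts buffer updates \emph{only} at context switches. But when $\scbab$ takes a \emph{buffer arw} step in context $j$, the side condition $j\geq\max\{c(y,t)\mid y\in\X\}$ merely guarantees that the pending writes of $t$ will be flushed \emph{by the end} of context $j$; in your \tso\ run they are still sitting in $\funB(t)$ at the moment the $\arw{x}{r_1}{r_2}$ is reached, so the \tso\ precondition $\funB(t)=\epsilon$ fails and the step is not enabled. Your final paragraph flags arw as subtle but appeals to ``the normalisation used in the completeness direction,'' which is the wrong direction: normalisation transforms a valid \tso\ run into another valid \tso\ run, it does not tell you how to manufacture a valid \tso\ run from an $\scbab$ run.

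The paper's fix is the opposite manoeuvre: rather than pushing updates forward to the context boundary, it pulls the relevant updates \emph{backwards} past the arw. Since the arw side condition certifies that every write currently in $t$'s buffer is scheduled to flush within context $j$, one may flush them immediately before the arw; the arw then sees an empty buffer and the standard \tso\ rule applies. Equivalently, the paper first shows that the ``modified'' arw semantics used by $\scbab$ (buffer arw / memory arw, no empty-buffer requirement) is reachability-equivalent to the standard \tso\ arw, and only then performs the end-of-context normalisation. You need one of these two arguments to close the gap.
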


The proof of \autoref{lem:scbab} is given in \autoref{sec:app_scbab}. Next, we abstract away the infinite data domain from $\scbab$. 
We remove this last source of infinity by constructing a finite state machine $\relAB$ from $\scbab$. 

\subsubsection{Domain Abstraction}
We use domain abstraction to solve \reachof{\pmc(k)}{\D,\Relord} by reducing state reachability of $\scbab$ to reachability of a finite state machine.
We introduce the set of relations $\Relinit=\{ =, \neq , < \}$.
To abstract away the infinite data domain, we abstract from the exact values of the variables. Instead of storing actual values, we store which relations from $\Relinit$ holds between which pairs of variables, which is finite information. This way, we reduce the infinite domain $\D$ to the finite Boolean domain $\B$.
For example, $( q_\scbabname ,  x=y )$ is an abstraction of a configuration $(q_\scbabname,  \funX(x)=1,\funX(y)=1 )$.
Given a variable assignment $\funX$ and a relation $\rlvar$, we define $\rlvar_\funX  (x,y) \define \rlvar (\funX (x),\funX (y))$.
Any variable assignment $\funX$ induces a set of relations $\Rel_\funX=\{ \rlvar_\funX \mid \rlvar \in \Relinit \}$ over the variables $\X_{\scbabname}$. 
When considering multiple sets of relations we denote a relation $\rlvar \in \Rel$ as $\rlvar_\Rel$.
For a variable assignment $\funX$, we say set of relations $\Rel$ over variables  is consistent with $\funX$ if $\Rel=\Rel_{\funX}$. 

Given $\scbab=\langle \mathcal{Q_{\scbabname},X_{\scbabname}},\Delta_{\scbabname}, q_{\init}^{\scbabname}\rangle $, we now construct the finite state machine $\relAB=\langle \Q,\Delta,q_{\init}\rangle $ as follows: 
$\Q \define \Q_{\scbabname} \times \{ \rlvar_{\X_{\scbabname}} : \X_{\scbabname} \times \X_{\scbabname} \rightarrow \B \mid \rlvar \in \Relinit \}$. 
We abstract from a variable assignment by storing in the states which relations are satisfied. 
The initial state is $q_\init=(q^\scbabname_\init , \Rel_{\funXinit})$.
We define the transitions of  $\relAB$ in \autoref{fig:rules_relAB}.
\begin{figure*}[tb]
	\centering
\includegraphics[width=\textwidth]{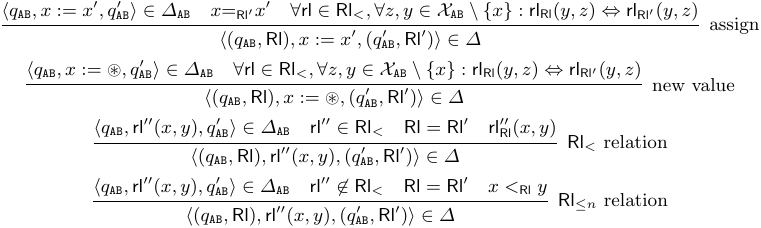}
	\caption{The transition relation of  $\relAB$. Sets $\Rel$ and $\Rel'$ satisfy (i) equality is an equivalence relation;
		(ii) disequality holds iff equality does not hold;
		(iii) $"<"$ is a total order on variables that are not equal.}
	\label{fig:rules_relAB}
\end{figure*}
We construct the transitions such that they abstract from the transitions of the LTS induced by the semantics of $\scbab$.
Where the semantics on transitions of $\scbab$ require that certain values in the configurations before and after the operation are the same, the transitions of $\relAB$ only require that the relations between variables before and after the relation are the same. For instance, the assign rule for operation $x\define x'$ requires that $\Rel$ and $\Rel'$ are the same for all variables except $x$ and $x=_{\Rel'} x'$ must hold after the operation.
Conditions (i)-(iii) in \autoref{fig:rules_relAB} reflect the properties of $\Relinit$ on values. They ensure that $\Rel$ and $\Rel'$ have consistent variable assignments.
Note that for any operation $<_n$ (or $\leq_n$), we soften the condition to $x<_\Rel y$. We will show that this still results in an abstraction precise enough to be state reachability equivalent.

Since $\relAB$ is a finite state machine, it induces the obvious LTS where a configuration consists of a state.
  The following lemma shows that the construction is indeed an abstraction of $\scbab$. We assume  $\prog$ uses $\Relord$.
 \begin{lemma}\label{lem:abstract}
 	If $q_\scbabname$ is reachable by $\scbab$, then a state $(q_\scbabname,\Rel)$ is reachable by  $\relAB$.
 \end{lemma}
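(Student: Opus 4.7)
The plan is to prove Lemma~\ref{lem:abstract} by induction on the length of a run of $\scbab$, strengthening the statement so that if a configuration $(q_\scbabname,\funX)$ is reachable in $\scbab$ by a run of length $n$, then the abstract state $(q_\scbabname,\Rel_\funX)$ is reachable in $\relAB$ by a run of length $n$. The strengthening is essential because it lets us carry forward the invariant that the ``current'' abstract state is precisely the one induced by the concrete assignment $\funX$.

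The base case is immediate: the initial configuration $(q^\scbabname_\init,\funXinit)$ of $\scbab$ induces the set $\Rel_{\funXinit}$, which by construction is the second component of the initial state of $\relAB$. Moreover, for any concrete $\funX$, the induced set $\Rel_\funX$ trivially satisfies the well-formedness conditions (i)--(iii) of Figure~\ref{fig:rules_relAB}: equality of values is an equivalence relation, disequality is its complement, and $<$ is a strict total order on the equivalence classes of distinct values.

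For the inductive step I would perform a case analysis on the $\scbab$-transition taken from $(q_\scbabname,\funX)$ to $(q'_\scbabname,\funX')$. For each shape of operation---assignment $r_1\define r_2$, non-deterministic $r_1\newval$, relation check $\rl{r_1}{r_2}$, buffer read, memory read, write, atomic read-write, and context switch (which simultaneously copies each $x_j$ for $x\in u(j)$ into the corresponding shared variable)---I would verify two points: (a) the guard of the matching $\relAB$-rule is satisfied by $\Rel_\funX$, and (b) the set $\Rel_{\funX'}$ is a legal choice for the successor $\Rel'$ prescribed by that rule. Item (a) holds because any equality, disequality or strict inequality used as a guard in $\scbab$ transfers verbatim to the corresponding Boolean $\rlvar_{\Rel_\funX}$. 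Item (b) holds because the pointwise constraints that $\relAB$ imposes on $\Rel'$ (``$\Rel'$ agrees with $\Rel$ outside the modified variable(s), and the stated equalities hold between the modified variable and its source'') are exactly the constraints that the concrete update forces on $\Rel_{\funX'}$.

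The main obstacle will be the relations $<_n$ and $\leq_n$ for $n\geq 1$, which are softened in $\relAB$ to the guard $<$. For the direction required here (concrete implies abstract) this is fortunately the easy direction: $\funX(r_1)+n<\funX(r_2)$ implies $\funX(r_1)<\funX(r_2)$, so the weaker abstract guard is automatic. The other slightly delicate case is $r_1\newval$, where the abstract rule must accommodate every consistent set of relations that the new value can induce with the remaining variables; here I would simply exhibit the specific choice $\Rel'=\Rel_{\funX[r_1\leftarrow d]}$ for the concrete value $d$ picked by $\scbab$, noting that well-formedness is automatic for any concretely induced relation set. Context switches require additional bookkeeping because several shared variables are overwritten simultaneously from the $x_j$'s recorded in $u(j)$, but the concrete equalities $\funX'(x)=\funX(x_j)$ propagate to $\Rel_{\funX'}$ without issue, which completes the induction.
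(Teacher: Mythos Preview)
Your proposal is correct and follows essentially the same approach as the paper: both establish the step-wise simulation property that a concrete transition $(q_\scbabname,\funX)\tstrans{op}(q'_\scbabname,\funX')$ in $\scbab$ yields an abstract transition $(q_\scbabname,\Rel_\funX)\tstrans{op}(q'_\scbabname,\Rel_{\funX'})$ in $\relAB$, with the lemma then following by induction along the run. The paper's proof is terser---it spells out only the $x\newval$ case and declares the others analogous---whereas you make the induction and case analysis more explicit, but the underlying argument is the same.
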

 
 \medskip

\begin{proof}
	Assume $\langle (q_\scbabname, \funX )\tstrans{op} (q'_\scbabname, \funX' ) \rangle$.
	We argue that $\arel{(q_\scbabname, \Rel_{\funX})} {op} {(q'_\scbabname, \Rel_{\funX'} )}\in \Delta$ holds as well. The lemma follows immediately.
	We show this for operation $x\newval$. For all other operations, the proof is analogue and we omit it.
	
	It follows from the semantics of $x\newval$, that $\funX(y)=\funX'(y)$ for any $y \in \X_\scbabname\setminus\{x\}$ holds. 
	This means $\Rel_{\funX}$ and $\Rel_{\funX'}$ satisfy the new value rule. 
	The equality relations in  $\Rel_{\funX}$ and $\Rel_{\funX'}$ are consistent with the equality relations on values of $\funX$ and $\funX'$. The equality relation given by the values is an equivalence relation and thus Condition (i) is satisfied. Similarly, Condition (ii) is satisfied since values are obviously not equal if and only if they are not related by equality. Condition (iii) is satisfied since relation $<$ on values forms a total order.
	All conditions are satisfied. This means $\arel{(q_\scbabname, \Rel_{\funX})} {x\newval} {(q'_\scbabname, \Rel_{\funX'} )}\in \Delta$.
\end{proof}
\medskip

\begin{lemma}\label{thm:relAB}
	If a state $(q_\scbabname,\Rel)$ is reachable by $\relAB$, then $q_\scbabname$ is reachable by $\scbab$.
\end{lemma}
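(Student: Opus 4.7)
The plan is to prove the converse of Lemma~\ref{lem:abstract} by induction on the length of a $\relAB$-run reaching $(q_\scbabname, \Rel)$, simultaneously constructing a $\scbab$-run that ends in a concrete configuration $(q_\scbabname, \funX)$ with $\Rel_\funX = \Rel$. The key difficulty is that $\relAB$ only tracks $=$, $\neq$ and $<$ between variables, whereas the relations in $\Relord$ include the stronger $<_n$ and $\leq_n$; the abstract rule for such a guard only checks $x <_\Rel y$ (see Fig.~\ref{fig:rules_relAB}), while the concrete rule of $\scbab$ demands a gap of at least $n$. Reconciling this mismatch is where the unboundedness of $\D$ is essential.

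Let $N$ be the maximum parameter $n$ appearing in any $<_n$ or $\leq_n$ guard of $\prog$; this is finite because $\prog$ is finite. Call $\funX$ \emph{well-spread} if for all $x, y \in \X_{\scbabname}$ with $\funX(x) \neq \funX(y)$ we have $|\funX(x) - \funX(y)| > N$. I strengthen the statement of the lemma to: every $\relAB$-reachable state $(q_\scbabname, \Rel)$ is witnessed by a $\scbab$-run ending in some $(q_\scbabname, \funX)$ with $\Rel_\funX = \Rel$ and $\funX$ well-spread. The base case is immediate since $\funXinit \equiv 0$ is well-spread vacuously and induces the matching initial relations.

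For the inductive step I walk through each rule of Fig.~\ref{fig:rules_relAB}. Deterministic operations (register-to-register assignments, reads, writes that copy an existing variable, context switches, atomic read-writes and the bookkeeping updates of $c$, $u$, $act$, $j$) merely copy values already present in $\funX$, so they preserve well-spreadness automatically and the concrete transition can be taken verbatim. For a guard $\rl{r_1}{r_2}$ the abstract rule fires on $=$, $\neq$ or $<$ in $\Rel$; equality and disequality guards transfer directly, and for $\rl \in \{<_n, \leq_n\}$ well-spreadness yields $\funX(r_2) - \funX(r_1) > N \geq n$, so the concrete guard holds.

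The one step that requires real work is nondeterministic value generation, i.e.\ $r_1 \newval$ and, analogously, the nondeterministic value picked for $x_t$ and $x_{j'}$ by a write. The abstract rule is free to select any relations between the fresh variable and the others subject to conditions (i)--(iii), and I must realize that choice by a concrete $d \in \D$ that preserves well-spreadness. Since $\X_{\scbabname}$ is finite, the current values split $\D$ into finitely many intervals; in both $\D = \mathbb{R}$ and $\D = \N$ any nonempty open interval of length exceeding $2N$ contains a value at distance greater than $N$ from every existing value, and the unbounded tails below the minimum and above the maximum of the current valuation always admit such slots. Hence whichever abstract equivalence class the rule places $d$ into (equal to some existing variable, strictly between two consecutive existing values, below all, or above all) can be realized concretely while keeping the gap invariant, completing the induction and thus the lemma.
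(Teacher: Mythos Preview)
Your inductive invariant is not self-maintaining, and the gap shows up exactly in the $r_1 \newval$ case you flagged as ``the one step that requires real work.'' You require that any two distinct values of $\funX$ differ by more than $N$. Now suppose the abstract rule inserts the fresh variable strictly between two \emph{consecutive} existing values $a < b$. Your invariant gives you only $b - a > N$; to place a new value $d$ with $d - a > N$ and $b - d > N$ you would need $b - a > 2N$, which you do not have. You even write ``any nonempty open interval of length exceeding $2N$'' without arguing that the relevant interval has that length --- it need not. Concretely, take $\D = \N$, $N = 1$, two consecutive values $3$ and $5$: there is exactly one integer strictly between them, namely $4$, and it is at distance $1$ from each endpoint, so well-spreadness is lost. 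After one more such insertion the invariant can fail entirely and a subsequent $<_n$ guard need not be satisfiable. No fixed a-priori gap bound can survive unboundedly many nested insertions.

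The paper avoids this by \emph{not} fixing a gap invariant. Instead it proves a separate stretching lemma: if $(q_\scbabname,\funX)$ is reachable by $\scbab$, then for any threshold $d>0$ and shift $c>0$ the configuration obtained by adding $c$ to every value $\geq d$ is also reachable, with the same induced relations in $\Relinit$. This is shown by uniformly shifting those values along the entire witnessing run. The induction then keeps only the weaker invariant $\Rel_\funX = \Rel$ and, whenever a $<_n$ guard or a $\newval$ insertion needs more room, first applies the stretching lemma to open up the required gap on demand. That is the missing ingredient in your argument. A secondary point: for $\D = \N$ there is no ``unbounded tail below the minimum'' once the minimum is $0$; the paper handles this by adding an unused anchor variable that stays at $0$ and forbidding the abstract system from placing a fresh value below it.
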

We prove this by performing an induction over runs of $\relAB$ and constructing equivalent runs of $\scbab$. In order to do this, we construct configurations with consistent variable assignments. 
The main challenge is that these variable assignments may not have large enough distances between the values. Take the operation $x <_n y$, for instance. Here, $\relAB$ only requires $x < y$. Note that any value other than $0$ was created by an $x\newval$ operation. We can modify a run so that some of these operations assign larger values. This way, we can increase the distances of variable assignments of reachable configurations without changing their consistency with respect to relations. The formal proof of this is given in \autoref{sec:app:relAB}.

\begin{theorem}\label{cor:pmctotalorder}
	\reachof{\pmc(k)}{\D,\Relord} is \pspace\ complete.
\end{theorem}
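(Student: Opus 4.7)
The plan is to derive both directions of \pspace-completeness from results already in hand. For the hardness direction, I invoke Theorem~\ref{thm:pmchardness} directly: its reduction from the non-emptiness of the intersection of regular languages uses only $=$ and $\neq$, both of which lie in $\Relord$, so the lower bound transfers verbatim to the present problem.

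For \pspace-membership, I would chain the three reduction lemmas already established. Given an instance $(\prog,k,\qfinal)$, Lemma~\ref{lem:scbab} reduces state reachability under \tso\ restricted to $\pmc(k)$ to state reachability in the intermediate machine $\scbab$, which is still a transition system over the infinite domain~$\D$. Combining Lemma~\ref{lem:abstract} with Lemma~\ref{thm:relAB}, a control state $q_\scbabname$ is reachable by $\scbab$ iff some pair $(q_\scbabname,\Rel)$ is reachable by the finite-state machine $\relAB$. Hence the original problem collapses to plain control-state reachability in $\relAB$.

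The remaining step is a routine complexity analysis of $\relAB$. I would show that each of its states admits an encoding of polynomial size in $|\prog|+k$: a state of $\scbab$ is a tuple $(\funQ,act,j,c,u)$ whose components require $O(|\T|\log|\Q|)$, $O(k\log|\T|)$, $O(\log k)$, $O(|\X|\cdot|\T|\cdot\log(k+1))$ and $O(k\cdot|\X|)$ bits respectively; the relational component $\Rel$ consists of Boolean binary relations over $\X_\scbabname$, whose cardinality is bounded by $|\X|(1+k+|\T|)+|\R|$, so $\Rel$ takes $O(|\X_\scbabname|^2)$ bits. Each guard of the rules displayed in \autoref{fig:rules_relAB} (consistency of $\Rel$ and $\Rel'$ under (i)--(iii) plus the operation-specific condition) is checkable in polynomial time. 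A nondeterministic search that stores only the current state of $\relAB$ and guesses successors therefore runs in polynomial space, and \pspace-membership follows by Savitch's theorem.

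The hard work has been front-loaded into the three lemmas --- the subtlety of Lemma~\ref{thm:relAB}, in particular, is what makes the coarse order-only abstraction sufficient despite the presence of gap constraints $<_n,\leq_n$. The only real obstacle in the theorem itself is the bookkeeping sketched above: making sure that every quantity tracked by $\scbab$ (in particular $c$ and $u$, which explicitly reference contexts) and by its relational abstraction really does fit in polynomial space, and that each abstract transition check in \autoref{fig:rules_relAB} is polynomial-time verifiable.
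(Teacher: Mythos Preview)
Your proposal is correct and follows essentially the same route as the paper: invoke Theorem~\ref{thm:pmchardness} for hardness, chain Lemmas~\ref{lem:scbab}, \ref{lem:abstract} and~\ref{thm:relAB} to reduce to reachability in the finite machine $\relAB$, and then argue that a state of $\relAB$ has a polynomial-size encoding so that an on-the-fly nondeterministic search (with a polynomially bounded step counter, which you should mention explicitly rather than leaving implicit in the appeal to Savitch) decides reachability in \pspace. Your bit-counting for the components $(\funQ,act,j,c,u)$ and $\Rel$ is in fact more detailed than the paper's own accounting, but the structure of the argument is identical.
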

\begin{proof}
	While $\Relord$ is an infinite set, $\Relinit$ has only $3$ relations.
	This means $\relAB$ is a finite transition system where state reachability is decidable.	
	According to \autoref{lem:scbab}, \autoref{lem:abstract} and \autoref{thm:relAB}, deciding state reachability of $\relAB$ is equivalent to solving \reachof{\pmc(k)}{\Relord}.
	
	 We non-deterministically solve the state reachability of $\relAB$ by guessing a run that is length-bounded by the size of the state space and checking whether it reaches $\qfinal$. 
	 We store the current state $((\funQ,act,j,c,u), \Rel)$ together with a binary encoding of the current length of the run. Note that the state only requires polynomial space. The number of states of $\relAB$ is exponential in the program size as well as $k$, which means the binary encoding also requires polynomial space.
	
	We extend the run by choosing to either perform a context switch or an operation. 
	We begin with the initial state $q^\scbabname_\init$, which is a special case since we first need to guess a function $act$ according to the init rule in \autoref{fig:rules_scbab}. 
	To perform an operation, we look at the current state of the active thread $\funQ(act(j))$, pick an outgoing transition from the program, and update the state according to the corresponding rules given in \autoref{fig:rules_scbab} and \autoref{fig:rules_relAB}.
	
	We illustrate this on the new-value operation. Assume we pick the outgoing transition $\arel{q_a} {x\newval} {q_b}\in \Delta_{act(j)}$. In this case, we update the state according to the local rule in \autoref{fig:rules_scbab}. Then we update the set $\Rel$ according to the new-value rule in \autoref{fig:rules_relAB}. We leave all relations that do not include $x$ unchanged, and we non-deterministically choose $x$ to be either equal to some variable, or to be between two other adjacent variables, or to be the largest or smallest variable. We update the relations to $x$ accordingly. For any other operation, the changes to $\Rel$ are uniquely determined. For writes, we additionally need to non-deterministically pick some future context $j'$ of the update according to the write rule in \autoref{fig:rules_scbab}. In the case of a context switch, we perform a series of variable assignments according to the context switch rule.
	
	Note that we do not explicitly construct the entire $\relAB$ transition system; the program and the rules given in \autoref{fig:rules_scbab} and \autoref{fig:rules_relAB} are sufficient to guess a run. Each step can be performed in polynomial space. Once  $\funQ(act(j))=\qfinal$ holds, we know $\qfinal$ is reachable. The complexity of this process is in \pspace. According to \autoref{thm:pmchardness}, the problem is \pspace\ hard as well.
\end{proof}

\section{Conclusion}
We examined safety verification of concurrent programs running under \tso\ that operate on variables ranging over an infinite domain.
We have shown that this is undecidable even if the program can only check the variables for equality and non-equality. 
We studied a context bounded variant of the problem as well.
Here, we solved the problem for programs using relations in $\Relord$ and showed that it is \pspace\ complete.

As future work,  we plan to examine more expressive under-approximations of the program behaviour than the presented context bounded analysis and how these under-approximations affect decidability and complexity of the problem.
We also intend to explore the problem for additional relations and/or operations a program may perform.

\bibliography{main}
\appendix
\section{Proof of \autoref{thm:undec}}\label{sec:app_undec}
	Since both the write buffers and lossy channels can be emptied without changing the state at the end of a run, it is sufficient to only examine reachable configurations $\conf'$ with empty buffers and empty channels for the reachability problems for \tso\ and lossy channels. 
	
	We now examine a run of $\lcs$ and we construct an equivalent run of the program. 
	It does not matter for reachability when an element in the channel gets lost only if it gets lost. This means we can assume w.l.o.g. that any channel content that gets lost is lost immediately in the run, meaning the only place where a message gets lost is in the transition it gets send. 
	It is easy to see that we can simulate a run in the lossy channel system as follows:
	For any configuration $\conf$ of $\lcs$ the corresponding configuration $\conf_{TSO}$ of $\prog$ is as follows:
	The state of $t$ is the state of  $\lcs$, the state of $t_{ch}$ is $q_{ch}$. 
	The write buffer of $t_{ch}$ is consistent with the channel $ch$. 
	The write buffer of $t$ is empty. The registers $\X$ are equal to the corresponding variables of $\conf$ and everything else is $\rnew$. 
	
	We replace any transition $\conf \xrightarrow[]{ch!\langle a,x\rangle } \conf'$ where the message is lost with the following sequence of transitions where the value is written and leaves the write buffer but it is never read:
	$ \conf_{TSO}\xrightarrow {t,\wop {x_a} {r_x}}. \xrightarrow{t,\wop{x_a}{\rnew}} .  \xrightarrow{t,u }. \xrightarrow{t,u } .\conf'_{TSO}$. 
	We replace any transition $ \conf \xrightarrow[]{ch!\langle a,x\rangle } \conf'$ where the message is not lost with the following sequence of transitions where the value is read by $t_{ch}$ and put into its write buffer
	$\conf_{TSO}\xrightarrow{ t,\wop {x_a} {r_x}}. \xrightarrow {t,\wop{x_a}{\rnew}}. \xrightarrow{ t,u } $.
	$\xrightarrow{ t_{ch},\rop{x_a}{\rtmp}}. \xrightarrow {t_{ch}, \rtmp \neq \rnew}. \xrightarrow {t_{ch},  \wop{y_a}{\rtmp}} . \xrightarrow {t_{ch},\wop{y_a}{\rnew}}. \xrightarrow {t,u } .\xrightarrow {t_{ch}, \rop{x_a}{\rtmp}}. \xrightarrow {t_{ch} ,  \rtmp = \rnew}\conf'_{TSO}  $.
	Note that $t_{ch}$ always returns to $q_{ch}$ and the buffer of $t$ remains empty after any such sequence.
	
	We replace any transition $\conf \xrightarrow[]{x\define {ch?a}} \conf'$ with the following sequence of transitions where the write leaves the buffer of $t_{ch}$ and is read by $t$:
	$\conf_{TSO} \xrightarrow{ t_{ch},u }. \xrightarrow {t,\rop{y_a}{r_x}}. \xrightarrow{ t,r_x\neq \rnew }. \xrightarrow{t_{ch}, u } .\xrightarrow[]{t,\rop{y_a}{\rtmp}}.  \xrightarrow{ t,\rtmp = \rnew }\conf'_{TSO}$.
	Note that the write buffer content of $t_{ch}$ is equivalent to the content of channel $ch$, it behaves the same throughout the run. 
	It follows that for any state $\conf'$ reachable by $\lcs$ from $\conf$ the corresponding state $\conf'_{TSO}$ is reachable by $\prog$ from $\conf_{TSO}$.
	
	The other direction is analogue, except that thread $t$ may not immediately perform the updates.
	This means the content of the channel may be split up between the buffers of $t$ and $t_{ch}$.
	However, any run of $\prog$ can be rearranged such that any write of $t$ immediately leaves the buffer and if it is read by $t_{ch}$, then this also happens immediately. Any write leaving the buffer of $t_{ch}$ can be delayed until immediately before the value is read by $t$. This results in a run where writes and reads and updates occur in the same sequences as the ones constructed above.
	The only exception is that it is possible that a write can leave the write buffer of a thread $t_{ch}$ without being read afterwards.
	In this case, there has been an earlier subsequence of transitions where $t_{ch}$, starting at $q_{ch}$, reads the corresponding writes and puts them into its own buffer before arriving back in $q_{ch}$. 
	We can simply remove this subsequence from the run and still arrive in the same configuration.
	It follows that any such changed run of $\prog$ has a corresponding run of $\lcs$ and thus reachability is implied in the other direction as well.
\section{Definition and Correctness of $\prog(\fsm_1\ldots \fsm_n)$}\label{sec:app:pspacehard}
We define thread $t=\langle \Q, \R, \Delta, q_{\init}\rangle $ as follows: It holds $\Q=\{q_\init, q \} \cup \{ q^a_i \mid i< n , a\in \Sigma \} \cup \{ q_\delta, q_i^\final \mid \delta\in \Delta_i,  i\leq n  \} $ and $\R = \{ r_q , r_i \mid q\in \Q_i , i\leq n  \}$. In addition, $\Q$ contains some helper states which we omit from the formal definition.

The set of transitions $\Delta$ contains an initialization sequence that starts at $q_\init$ and, using helper states, assigns each register $r_{q_i}$ a new value $r_{q_i} \newval$, then it checks that $r_{q_i}\neq r_{q'_i}$ holds for each pair  of states $q_i \neq q'_i$. Finally, it ensures that the current states are the initial states by assigning $r_i\define r_{q^i_\init}$. The sequence ends in state $q$.
Each transition $\delta = q_i\tstrans{a} q'_i  \in \Delta_i$ is simulated with transitions 
$q_{i-1}^a \tstrans{r_i=r_{q_i}} q_\delta, \; q_\delta\tstrans{r_i\define r_{q'_i}} q_{i}^a \in \Delta$ where $q^a_0= q^a_{n} =q$ and $i\leq n$.
Finally, we ensure that final states have been reached with $q \tstrans{r_1=r_{q_1}} q_{1}^\final, \; q_{i-1}^\final \tstrans{r_{i}=r_{q_i}} q_{i}^\final \in \Delta$ for $q_i\in \Q^F_i,  1<i\leq n$
It is easy to see that this construction is polynomial in the size of $\fsm_1\ldots\fsm_n$.
It remains to show that it is a correct reduction.

\medspace

Note that the program reaches $q_n^\final$ if and only if it can reach a configuration with state $q$ and $r_i=r_{q_i}$ as well as $q_i\in \Q^F_i$ for all $i\leq n$.
Correctness follows immediately from this fact and the following theorem:
\begin{theorem}
	$\prog ( \fsm_1 \ldots \fsm_n )$ has a run that contains $m+1$ configurations with state $q$ and that ends in a configuration where the state is $q$  and $r_i=r_{q_i}$ for $i\leq n$ holds if and only if 
	there is a word $w=a_1\ldots a_m$ such that there is a run $q^\init_i \tstrans{a_i} \ldots \tstrans{a_m} q_i$ for each $i\leq n$.
\end{theorem}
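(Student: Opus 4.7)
The plan is to proceed by induction on $m$, maintaining the invariant that whenever the program reaches state $q$ for the $(\ell+1)$-th time (with $\ell \leq m$), the register values satisfy $r_i = r_{q^\ell_i}$ for some states $q^\ell_i \in \Q_i$, and these states are reachable in $\fsm_i$ from $q^\init_i$ via a common word $a_1 \cdots a_\ell$. The backbone of this invariant is the initialization sequence, which assigns pairwise distinct values to the registers $\{r_p \mid p \in \Q_i,\, i \leq n\}$ via $\newval$ followed by disequality checks, and then sets $r_i \define r_{q^\init_i}$. Since the $r_p$-registers are never overwritten afterwards, the map $p \mapsto \funR(r_p)$ is injective throughout the run.

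For the \emph{if}-direction, given runs $q^\init_i = q^0_i \tstrans{a_1} q^1_i \tstrans{a_2} \cdots \tstrans{a_m} q^m_i = q_i$ in each $\fsm_i$, I would construct a program run that first executes the initialization and then, for $\ell = 1, \ldots, m$, performs one loop iteration ``choosing letter $a_\ell$''. In iteration $\ell$, at each state $q^{a_\ell}_{i-1}$ the gadget corresponding to $\delta = q^{\ell-1}_i \tstrans{a_\ell} q^\ell_i \in \Delta_i$ is selected; the guard $r_i = r_{q^{\ell-1}_i}$ is enabled by the inductive hypothesis, and the subsequent update $r_i \define r_{q^\ell_i}$ reestablishes the invariant for the next iteration. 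After $m$ iterations the program is back at $q$ with $r_i = r_{q_i}$, which is precisely the terminal configuration demanded.

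For the \emph{only if}-direction, consider a program run with $m+1$ visits to $q$ that ends with $r_i = r_{q_i}$. Each segment between two consecutive visits to $q$ must traverse gadgets corresponding to a single letter $a_\ell$ (uniquely determined by the superscript of the intermediate states $q^{a_\ell}_j$ that the segment passes through) and to transitions $\delta_\ell^i = p^i \tstrans{a_\ell} p'^i \in \Delta_i$. The guard $r_i = r_{p^i}$ must be satisfied when the $i$-th gadget is entered, and by the injectivity of $p \mapsto \funR(r_p)$ together with the inductive invariant this forces $p^i = q^{\ell-1}_i$; the update then advances the tracked state to $q^\ell_i = p'^i$. Concatenating the $\delta_\ell^i$ yields runs of each $\fsm_i$ on the common word $w = a_1 \cdots a_m$, and the terminal condition $r_i = r_{q_i}$ combined with injectivity forces $q^m_i = q_i$.

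The principal obstacle, and it is a mild one, is articulating the injectivity invariant cleanly and verifying that no spurious transitions outside the prescribed gadget structure are ever enabled in an ``encoding'' configuration of the registers. Once this is in place, everything else is pure bookkeeping; no subtleties of \tso\ arise because the construction uses a single thread and only local registers, so the store buffer is always empty and plays no role in the correctness argument.
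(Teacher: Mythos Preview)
Your proposal is correct and follows essentially the same approach as the paper: an induction on $m$ that matches one ``$q$-to-$q$'' segment of the program run with one letter of the common word. The paper compresses both directions into a single chain of iff's in the induction step, whereas you separate the two directions and make the injectivity invariant for $p \mapsto \funR(r_p)$ explicit; the paper leaves that invariant implicit in the phrase ``according to the construction of $\Delta$,'' so your version is in fact a little more careful on precisely the point that deserves care.
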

\begin{proof}
	We prove this with an induction over $m$.
	\paragraph*{Induction Basis ($m=0$):}
	The only  run of $\prog ( \fsm_1 \ldots \fsm_n )$ with only one configuration in $q$ is the one that consists of the initialization sequence. It ends in a configuration in state $q$ that satisfies $r_i=r_{q^\init_i}$ for $i\leq n$. 
	Only the runs $q^\init_1 \ldots q^\init_n$ without transitions correspond to the word $w=\epsilon$.
	\paragraph*{Induction Step $m\rightarrow m+1$:}
	Assume there is a word $w=a_i\ldots a_m a_{m+1}$ such that there exist runs $q^\init_i \tstrans{a_i} \ldots \tstrans{a_m} q'_i  \tstrans{a_{m+1}} q_i$ for each $i\leq n$.
	This is the case iff there exist runs $q^\init_i \tstrans{a_i} \ldots \tstrans{a_m} q'_i $ as well as transitions  $q'_i  \tstrans{a_{m+1}} q_i$ for each $i\leq n$.
	
	According to the induction hypothesis, this is the case if and only if the following holds:
	There are states $q'_1\ldots q'_n$ such that there is a run $\run$ of the program that contains $m+1$ configurations with state $q$ and ends in a configuration $\gamma$ with state $q$ that satisfies $r_i=r_{q'_i}$ for each $i\leq n$. 
	In addition there are transitions $q'_i  \tstrans{a_{m+1}} q_i \in \Delta_i$ for each $i\leq n$.
	According to the construction of $\Delta$, such transitions exist if and only if 
	there is a run $\run'$ where $q$ occurs only in the first and last configuration and the run goes from $\gamma$ to a configuration with state $q$ that satisfies $r_i=r_{q_i}$ for each $i\leq n$. 
	Note since $\run$ ends in $\gamma$ and $\run'$ starts in $\gamma$, we can append the runs. 
	A run $\run''=\run.\run'$ with $m+2$ configurations in state $q$ that ends in a configuration in state  that satisfies $r_i=r_{q^\init_i}$ for $i\leq n$ exists if and only if 
	there is a word $w=a_1\ldots a_{m+1}$ such that there is a run $q^\init_i \tstrans{a_i} \ldots \tstrans{a_{m+1}} q_i$ for each $i\leq n$.
\end{proof}

\section{Proof of \autoref{lem:scbab}}\label{sec:app_scbab}

Note that if a thread reads its own write that was issued in the same context, then it is irrelevant whether it reads from the buffer or the shared memory. Also, no other thread can read from that write during the context. 
We show that we can assume w.l.o.g. that updates happen at the end of the contexts.
Let $\run \in  \pmc(k)$  be a run of $\prog$.
We show that we can move all the update transitions to the end of a context without destroying the correct $\prog$ semantics.
We examine any update followed by an operation $\op$ that is not an atomic read write of the same thread in $\run$:
$\conf \trans{t}{u} \conf'' \trans{t}{\op} \conf'$.
It follows immediately from the  \tso\ semantics that there is a $\conf_1$ such that the following is correct as well:
$\conf \trans{t}{\op} \conf_1 \trans{t}{u} \conf'$.
This means that any configuration reachable by a run $\run \in \pmc(k)$ can also be reached by a run $\run' \in \pmc(k)$ that is equivalent to $\run$ except the updates are moved to the end of the context. 

The exception to this are atomic read write operations. 
We cannot move updates past them since they require an empty buffer.
We adjust to that by not simulating the exact semantics of atomic read write. 
$\scbab$ does not require an empty buffer, merely the assurance that the current buffer will be emptied in the current context.
We introduce two kinds of modified atomic read write semantics, both are only enabled if the current content of the write buffer will be emptied in the current context.
A \emph{buffer arw} $\arw{x}{r_1}{r_2}$ reads the latest write on $x$ that is currently in the write buffer and, if its value is equal to $r_1$, changes it to the value of $r_2$.
A \emph{memory arw } does the same directly on the shared memory if there is no write on $x$ in the buffer.
Note that if the buffer is empty, only the memory arw can be executed and in that case, it has the same semantics as a standard arw.
This means any \tso\ run is still possible with the modified arw semantics.
Further, we can show that these new arw semantics are reachability equivalent to the standard arw semantics:
Given a run $\run$ with the modified arw semantics, 
If there is an arw, then that means the buffer will be empty in some later point in the context.
This means we can take the updates between the arw and that point that have corresponding writes before the arw and move them backwards past the arw. 
Doing this may cause a buffer arw to become a memory arw.
Afterwards, the buffer is empty when the arw is executed. This means we can replace it with a standard arw.
We can replace all modified arw operations in $\run$ with a standard arw in that way.
It follows that for every run with modified arw operations, there is a corresponding run with standard arw operations that has the same number of $\pmc$ contexts and reaches the same state.
We omit the formal details of this.
This means the modified arw semantics employed by $\scbab$ are reachability equivalent with the standard arw semantics.
\medskip

For state reachability, we can restrict ourselves to examining such runs where writes only leave the buffer at the end of each context. 
In any such \pmc\ run, a read of $x$ in thread $t$
\begin{itemize}
	\item reads the last write in $t$ on $x$ if there is such a write in its write buffer.
	\item reads the last write on on $x$ that left any buffer in an earlier context if the buffer of $t$ contains no write on $x$.
\end{itemize}

It follows that it is sufficient to track the last writes on each variable that leave the buffers in each context as well as the latest writes on each variable in the buffer of the active thread since no other write can be read. 
Those are exactly the writes stored in the constructed transition system. 
Instead of storing the whole buffer, it abstracts from it by only storing the last entries on every variable. We know whether the write buffer contains such an entry since we guess for each write operation at the moment it is issued in which context  it will update the memory. We store this guess for the last writes of each variable for every thread.
It is clear that the transitions are consistent with the \tso\ semantics and that any \tso\ run where updates occur at the end of the contexts has an equivalent run of $\scbab$.
It follows from this observation that the construction correctly models state reachability of processor-memory-bounded \tso\ behaviour.
\section{Proof of \autoref{thm:relAB}}\label{sec:app:relAB}
First, we show that a state reachable by $\scbab$ is also reachable by $\relAB$. 
This is the case because $\relAB$ abstracts from the domain:
Let $\run$ be a run of $\scbab$. It follows from \autoref{lem:abstract}, that we can construct a run $\run'$ of $\relAB$ by replacing every step $(q,\funX)\tstrans{op}(q',\funX')$ of $\run$ with  $(q,\Rel_{\funX})\tstrans{op}(q',Rel_{\funX'})$.
It follows that if $\scbab$ reaches a configuration $(q,\funX)$, there is a run of $\relAB$ that reaches $(q,\Rel_{\funX})$.

It remains to show that for any  state $(q_a,\Rel_a)$ reachable by $\relAB$ , there is a configuration $(q_a,\funX_a)$  reachable by $\scbab$
such that $\Rel_a$ is consistent with $\funX_a$.
Here, the main challenge is that while $\relAB$ models the ordering of the variables, it does not keep track of the precise distances between variables. So how can we decide whether a relation $<_n$ holds?
If there is a non-zero value $d$ in a reachable configuration of $\scbab$, then $d$ and all larger values were generated by $x\newval$ operations earlier in the run. We can change the run so that these operations could also assign even larger values. This increases the distances between variables.
For state reachability we can assume that any distance between variables with different values is large enough. 
To this effect, we use the following lemma, which shows that if a configuration is reachable, then any configuration with larger distances is also reachable. 
We can always increase all values in a run greater or equal than some $d$ by the same value $c$.
\begin{lemma}\label{lem:increasingdistances}
	For any $0<c,d\in \D$ holds that if a configuration $(q_a,\funX)$ is reachable by $\scbab$, then $(q_a,\funX')$ with $\funX'(x)=\funX(x)$ if $\funX(x)<d$ and $\funX'(x)=\funX(x)+c$ if $\funX(x)\geq d$ is also reachable.
	It holds $\Rel_\funX=\Rel_{\funX'}$.
\end{lemma}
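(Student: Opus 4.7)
The plan is to take any run $\run$ of $\scbab$ ending in $(q_a, \funX)$ and construct a modified run $\run'$ of the same length that ends in $(q_a, \funX')$. The key observation is that $\scbab$ performs no arithmetic: every value appearing in a register, shared variable, buffer slot, or auxiliary variable $x_t, x_j$ is either the initial value $0$ or a copy, possibly through many transitions, of a value produced by some $x \newval$ operation. Since $x \newval$ is non-deterministic and may pick any value in $\D$, we are free to change its choice.

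Define $\phi : \D \to \D$ by $\phi(v) = v$ if $v < d$ and $\phi(v) = v + c$ otherwise. Note $\phi(0) = 0$ since $d > 0$, and $\phi$ is strictly order-preserving and injective. I would construct $\run'$ by letting each $x \newval$ operation that originally chose $v$ now choose $\phi(v)$, and leaving every other transition syntactically unchanged. An induction on the length of $\run$ maintains the invariant that every value stored in any register, shared variable, buffer entry, or auxiliary variable in the $i$-th configuration of $\run'$ is the image under $\phi$ of the corresponding value in the $i$-th configuration of $\run$. The inductive step is immediate for all transitions that merely copy values: local assignments, reads, writes into $x_t$ and $x_{j'}$, the bulk update performed at a context switch, and the update part of an atomic read-write.

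The only non-trivial case concerns transitions with guards, namely $\rl{r_1}{r_2}$ and the equality check inside $\arw{x}{r_1}{r_2}$. For each relation in $\Relord$ I would run a two-by-two case analysis on which side of $d$ the compared values lie. Equality and disequality are immediate because $\phi$ is injective and shifts the upper part by a uniform constant. For the ordering relations $<, \leq, <_n, \leq_n$: if both values sit below $d$ or both sit above $d$ their difference is preserved, so the guard still holds; in the case $v_1 < d \leq v_2$ we have $\phi(v_1) = v_1$ and $\phi(v_2) = v_2 + c > v_2$, so any lower-bound relation with bound $n$ on the right is preserved when enlarging the right-hand side; and the dual case $v_1 \geq d > v_2$ is impossible whenever the guard requires $v_1$ to be the smaller value. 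Hence every guard satisfied in $\run$ is also satisfied in $\run'$, so $\run'$ is a valid run of $\scbab$, and its final configuration has exactly the valuation $\phi \circ \funX = \funX'$ required by the lemma.

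The identity $\Rel_\funX = \Rel_{\funX'}$ then drops out from the same case analysis restricted to $\Relinit = \{=, \neq, <\}$: monotonicity and injectivity of $\phi$ ensure these three relations on pairs $\funX(x), \funX(y)$ agree with the corresponding relations on $\phi(\funX(x)), \phi(\funX(y))$. The main obstacle is merely setting up the value-propagation invariant cleanly enough to cover all of $\scbab$'s bookkeeping (buffers, per-thread copies $x_t$, per-context copies $x_j$, the set $u(j)$); but since these are populated only by copying, no new ideas beyond the observation ``no arithmetic'' are required, and the entire argument reduces to the single fact that $\phi$ preserves every guard in $\Relord$ in the forward direction.
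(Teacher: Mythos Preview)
Your proposal is correct and follows essentially the same approach as the paper: both take a run reaching $(q_a,\funX)$, uniformly shift every value $\geq d$ by $c$ throughout the run, and verify by induction that every transition (in particular every guard from $\Relord$) remains enabled. Your use of an explicit map $\phi$ and the ``no arithmetic'' observation make the invariant slightly cleaner than the paper's value-list formulation, but the argument is the same.
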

\begin{proof}
	
	Let $\run$ be a run of $\scbab$ ending in $(q,\funX)$.
	The run contains a set of distinct values which we list in ascending order: $d_1< \ldots < d_m$. 
	Let run $\run'$ be the same run as $\run$ except  all occurrences of any value $d'$ with $d'\geq d$ are replaced by $d'+c$:
	Let $d_i$ be the first value with $d_i\geq d$.
	The run $\run'$ contains the set of values: $d_1< \ldots < d_{i-1}< d_i+c < \ldots < d_m+c$. 
	We show that $\run'$ is a correct run with the desired property with an induction over the run $\run$:
	\noindent\textbf{Induction basis.} 
	The initial configuration of $\run$ is the same as in $\run'$: $(q_\init, \funX_{\init})$. This is the case since every value of $\funX_{\init}$ is 0 and we only change values starting at $d>0$.
	
	\noindent\textbf{Induction hypothesis.} If $\run$ ends in $(q,\funX)$, then $\run'$ ends in $(q,\funX')$ with $\Rel_{\funX}=\Rel_{\funX'}$.
	
	\noindent\textbf{Induction step.} There is a run $\run$ ending in $(q,\funX)$. 
	We add $(q,\funX)\tstrans{x<_n y }(q_a,\funX_a)$ to $\run$. 
	According to the induction hypothesis, there is a run $\run'$ ending in $(q,\funX')$.
	Let $\funX'_a$ be $\funX$ with the values replaced according to the lemma.
	It follows from the semantics of the operation that for any $z\in \X$ holds $\funX(z)=\funX_a(z)$. This means there is some $d_j$ such that $d_j=\funX(z)=\funX_a(z)$. If $j<i$, then it holds $d_j=\funX'(z)=\funX'_a(z)$. If $j\geq i$, then $d_j+c=\funX'(z)=\funX'_a(z)$. It follows $\funX'=\funX'_a$.
	From $\funX(x)<_n \funX(y)$ follows that there are $j,k$ such that 
	$\funX(x)=d_j  <_n \funX(y)=d_{j+k}$. It holds either $\funX'(x)=d_j$,  $\funX'(y)=d_{j+k}$, or $\funX'(x)=d_j$, $\funX'(y)=d_{j+k}+c$, or $\funX'(x)=d_j+c$, $\funX'(y)=d_{j+k}+c$. In each case $\funX'(x)<_n \funX'(y)$ still holds.
	It follows that we can add $(q,\funX')\tstrans{x<_n y }(q_a,\funX'_a)$ to $\run'$.
	For any other operation, the proof is analogue.		
	
	When constructing $\funX'_a$ from $\funX_a$, we see that if two value are the same, they stay the same. If two values are different, then $c$ is added to either none or both or only to the larger value. This means $\funX'_a$ satisfies the same relations of $\Relinit$ as $\funX_a$.
	It follows that $\Rel_{\funX_a}=\Rel_{\funX'_a}$.
	
\end{proof}
We apply an induction over the length of the run that reaches $(q_a,\Rel_a)$.

\noindent\textbf{Induction basis} The run of $\relAB$ consists only of the initial state $(q_0, \Rel_{\funXinit})$. The initial configuration of $\scbab$ is $(q_0, \funXinit)$ and ${\funXinit}$ is consistent with $\Rel_{\funXinit}$.

\noindent\textbf{Induction hypothesis} If a run of  $\relAB$ ends in $(q_a,\Rel_a)$, there is a run of $\scbab$ ending in $(q_a,\funX_a)$ such  that $\funX_a$ is consistent with $\Rel_a$.

\noindent\textbf{Induction step} We add a step $(q_a,\Rel_a)\tstrans{op}(q_b,\Rel_b)$ to the run. 
If the operation is $x\define y$ then $(q_a,\funX_a)\tstrans{x\define y}(q_b,\funX_a[x\leftarrow\funX_a(y)])$.
From the induction hypothesis and the semantics of $\relAB$ follows that $\funX_a[x\leftarrow\funX_a(y)]$ is consistent with $\Rel_b$.

If the operation is a relation in $\Relinit$, then transition $(q_a,\funX_a)\tstrans{\rl{x}{y}}(q_b,\funX_a)$ is possible since $\relname{rel}_a ({x},{y})\in \Rel_a$ and $\funX_a$ is consistent with $\Rel_a$. Since relations remain unchanged by the transition it holds $\Rel_a=\Rel_b$ and thus $\funX_a$ satisfies $\Rel_b$.

Assume the operation is a relation $x<_{n} y$ not satisfied by $\funX_a$. 
We can apply \autoref{lem:increasingdistances} to $(q_a,\funX_a)$ with $d=\funX_a(y)$.
This is possible since according to the semantics of $\relAB$, it holds $x<^a y$ and thus $\funX_a(y)>0$. We add a $c$ such that $\funX_a(y)+c=\funX_a(x)+n+1$. 
The resulting assignment $\funX_a'$ satisfies $x<_n y$ since $\funX_a'(y)=\funX_a(y)+c>\funX_a'(x)+n$.
It follows that $(q_a,\funX_a')\tstrans{x<_n y}(q_b,\funX_a')$ is possible.
Since relations remain unchanged by the transition, it holds $\Rel_a=\Rel_b$ and thus $\funX_a'$ is consistent with $\Rel_b$. Relation $x\leq_{n} y$ is analogue.

Assume the operation is an assignment $x\newval$. 
If it assigns a value equal to some variable $y$ indicated by $x=_b y$, then the behaviour is the same as $x\define y$ for which the property has already been proven.
Assume $x\newval$ assigns a new value with $x\neq^b y$ for all $y\in \X$ with $x\neq y$. 
Let $y,z\in \X$ be the closest variables to $x$ such that $y<^b x<^b z$.
If $ \funX_a (y)+1<\funX_a (z) $, then it holds $(q_a,\funX_a)\tstrans{x\newval} (q_b,\funX_a[ x\leftarrow \funX_a(y)+1 ]) $.
For any pair of variables different from $x$, their relations remain unchanged by the operation.
The relations of $y$ and $z$ to $x$ in $\funX_a[ x\leftarrow \funX_a(y)+1]$ are the same as in $\Rel_b$.
For any other variable, its relations to $x$ is determined by its relations to $y$ or $z$ (depending on whether they are smaller or larger). 
It follows that $\funX_a[ x\leftarrow \funX_a(y)+1]$ is consistent with $\Rel_b$ and thus the hypothesis remains satisfied.
If $\funX_a (y)+1= \funX_a (z)$, then we apply \autoref{lem:increasingdistances} to increase $d=\funX_a(z)$ by $c=1$. For the resulting assignment $\funX_a'$, it holds $ \funX_a '(y)+1<\funX_a' (z) $ and thus $(q_a,\funX_a')\tstrans{x\newval} (q_b,\funX_a'[ x\leftarrow \funX_a'(y)+1 ]) $ and $ \funX_a'[ x\leftarrow \funX_a'(y)+1 ]$ is consistent with $\Rel_b$.

If there is no larger variable $z$ with $x<_b z$ then it is easy to see that $(q_a,\funX_a)\tstrans{x\newval}(q_b,\funX_a[ x\leftarrow \funX_a(y)+1])$ and $\funX_a[ x\leftarrow \funX_a(y)+1]$ is consistent with $\Rel_b$.
We can ensure that $x$ is never smaller than all other variables by adding a new variable $x_0$  to $\scbab$ that is never used and thus remains 0. We add a restriction to the transition for operation $x\newval$ that ensures $x<x_0$ does not hold.
\end{document}